\newtheorem{theorem}{Theorem}
\newtheorem{theorem*}{Theorem}
\newtheorem{lemma*}{Lemma}
\newtheorem{corollary*}{Corollary}
\newtheorem{remark*}{Remark}
\theoremstyle{definition}
\newtheorem{example}{Example}
\theoremstyle{remark}
\newtheorem{remark}{Remark}
\begin{document}
\begin{center}
{\Large\bf{Bayesian Post-Processor and other Enhancements of Subset
Simulation for Estimating Failure Probabilities in High Dimensions}}
\vspace{5mm}

Konstantin M. Zuev\footnote{Division of Engineering and Applied
Science, California Institute of Technology, Mail Code 104-44,
Pasadena, CA 91125, USA (Emails: zuev@caltech.edu,
jimbeck@caltech.edu)}, James L. Beck$^1$, Siu-Kui
Au\footnote{Department of Building and Construction, City University
of Hong Kong, 83 Tat Chee Avenue, Kowloon, Hong Kong (Email:
siukuiau@cityu.edu.hk)}, Lambros S. Katafygiotis\footnote{Department
of Civil and Environmental Engineering, The Hong Kong University of
Science and Technology, Hong Kong, China (Email: lambros@ust.hk)}
\end{center}

\vspace{3mm}
\begin{abstract}
Estimation of small failure probabilities is one of the most
important and challenging computational problems in reliability
engineering. The failure probability is usually given by an integral
over a high-dimensional uncertain parameter space that is difficult
to evaluate numerically. This paper focuses on enhancements to
Subset Simulation (SS), proposed by Au and Beck, which provides an
efficient algorithm based on MCMC (Markov chain Monte Carlo)
simulation for computing small failure probabilities for general
high-dimensional reliability problems. First, we analyze the
Modified Metropolis algorithm (MMA), an MCMC technique, which is
used in SS for sampling from high-dimensional conditional
distributions. The efficiency and accuracy of SS directly depends on
the ergodic properties of the Markov chains generated by MMA, which
control how fast the chain explores the parameter space. We present
some observations on the optimal scaling of MMA for efficient
exploration, and develop an optimal scaling strategy for this
algorithm when it is employed within SS. Next, we provide a
theoretical basis for the optimal value of the conditional failure
probability $p_0$, an important parameter one has to choose when
using SS. We demonstrate that choosing any $p_0\in[0.1, 0.3]$ will
give similar efficiency as the optimal value of $p_0$. Finally, a
Bayesian post-processor SS+ for the original SS method is developed
where the uncertain failure probability that one is estimating is
modeled as a stochastic variable whose possible values belong to the
unit interval. Simulated samples from SS are viewed as informative
data relevant to the system’s reliability. Instead of a single real
number as an estimate, SS+ produces the posterior PDF of the failure
probability, which takes into account both prior information and the
information in the sampled data. This PDF quantifies the uncertainty
in the value of the failure probability and it may be further used
in risk analyses to incorporate this uncertainty. To demonstrate
SS+, we consider its application to two different reliability
problems: a linear reliability problem and reliability analysis of
an elasto-plastic structure subjected to strong seismic ground
motion. The relationship between the original SS and SS+ is also
discussed.
\end{abstract}

\vspace{5mm}

KEY WORDS: Rare Event Simulation; Stochastic Simulation Methods;
Markov chain Monte Carlo; Subset Simulation; Bayesian Approach.

\section{Introduction}
One of the most important and challenging problems in reliability
engineering is to estimate the failure probability $p_F$ for a
system, that is, the probability of unacceptable system performance.
This is usually expressed as an integral over a high-dimensional
uncertain parameter space:
\begin{equation}\label{pF}
p_F=\int I_F(\theta)\pi(\theta)d\theta=\mathbb{E}_\pi[I_F(\theta)],
\end{equation}
where $\theta\in \mathbb{R}^d$ represents the uncertain parameters
needed to specify completely the excitation and dynamic model of the
system; $\pi(\theta)$ is the joint probability density function
(PDF) for $\theta$; $F\subset \mathbb{R}^d$ is the failure domain in
the parameter space (i.e. the set of parameter values that lead to
performance of the system that is considered to be unacceptable);
and $I_F(\theta)$ stands for the indicator function, i.e.
$I_F(\theta)=1$ if $\theta\in F$ and $I_F(\theta)=0$ if
$\theta\notin F$. The dimension $d$ is typically large for dynamic
reliability problems (e.g. $d\sim 10^3$) because the stochastic
input time history is discretized in time. As a result, the usual
numerical quadrature methods for integrals are not computationally
feasible for evaluation (\ref{pF}).

Over the past decade, the engineering research community has
realized the importance of advanced stochastic simulation methods
for reliability analysis. As a result, many different efficient
algorithms have been developed recently, e.g. Subset Simulation
\cite{AuBeck}, Importance Sampling using Elementary Events
\cite{AuBeckISEE}, Line Sampling \cite{LS1}, Auxiliary domain method
\cite{ADM}, Spherical Subset Simulation \cite{KatafygiotisCheung2},
Horseracing Simulation \cite{HRS}, to name but a few.

This paper focuses on enhancements to Subset Simulation (SS),
proposed by Au and Beck in \cite{AuBeck}, which provides an
efficient algorithm for computing failure probabilities for general
high-dimensional reliability problems. It has been shown
theoretically \cite{AuBeck} and verified with different numerical
examples (e.g.
\cite{AuBeckSeismicRisk,AuChingBeck,SchuellerBeckAuKataGhanem}) that
SS gives much higher computational efficiency than standard Monte
Carlo Simulation when estimating small failure probabilities.
Recently, various modifications of SS have been proposed: SS with
splitting \cite{ChingAuBeck}, Hybrid SS \cite{ChingBeckAu}, and
Two-Stage SS \cite{KatafygiotisCheung}. It is important to
highlight, however, that none of these modifications offer a drastic
improvement over the original algorithm.

We start with the analysis of the Modified Metropolis algorithm
(MMA), a Markov chain Monte Carlo technique used in SS, which is
presented in Section \ref{Subset Simulation}. The efficiency and
accuracy of SS directly depends on the ergodic properties of the
Markov chains generated by MMA. In Section \ref{Optimal Scaling}, we
examine the optimal scaling of MMA to tune the parameters of the
algorithm to make the resulting Markov chain converge to
stationarity as fast as possible. We present a collection of
observations on the optimal scaling of MMA for different numerical
examples, and develop an optimal scaling strategy for MMA when it is
employed within SS for estimating small failure probabilities.

One of the most important components of SS which affects its
efficiency is the choice of the sequence of intermediate threshold
values or, equivalently, the intermediate failure probabilities (see
Section \ref{Subset Simulation}, where the original SS algorithm is
described). In Section \ref{optimal p0}, a method for optimally
choosing these probabilities is presented.

The usual interpretation of Monte Carlo methods is consistent with a
purely \textit{frequentist} approach, meaning that they can be
interpreted in terms of the frequentist definition of probability
which identifies it with the long-run relative frequency of
occurrence of an event. An alternative interpretation can be made
based on the \textit{Bayesian} approach which views probability as a
measure of the plausibility of a proposition conditional on
incomplete information that does not allow us to establish the truth
or falsehood of the proposition with certainty. Bayesian probability
theory was, in fact, primarily developed by the mathematician and
astronomer Laplace \cite{Laplace1,Laplace2} for statistical analysis
of astronomical observations. Moreover, Laplace developed the
well-known Bayes' theorem in full generality, while Bayes did it
only for a special case \cite{Bayes}. A complete development based
on Laplace's theory, with numerous examples of its applications, was
given by the mathematician and geophysicist Jeffreys \cite{Jeffreys}
in the early $20^{\textrm{th}}$ century. Despite its usefulness in
applications, the work of Laplace and Jeffreys on probability was
rejected in favor of the frequentist approach by most statisticians
until late last century. Because of the absence of a strong
rationale behind the theory at that time, it was perceived as
subjective and not rigorous by many statisticians. A rigorous logic
foundation for the Bayesian approach was given in the seminal work
of the physicist Cox \cite{Cox1,Cox2} and expounded by the physicist
Jaynes \cite{Jaynes2,Jaynes3}, enhancing Bayesian probability theory
as a convenient mathematical language for inference and uncertainty
quantification. Although the Bayesian approach usually leads to
high-dimensional integrals that often cannot be evaluated
analytically nor numerically by straightforward quadrature, the
development of Markov chain Monte Carlo algorithms and increasing
computing power have led over the past few decades to an explosive
growth of Bayesian papers in all research disciplines.

In Section \ref{Bayesian Subset Simulation} of this paper, a
Bayesian post-processor for the original Subset Simulation method is
developed, where the uncertain failure probability that one is
estimating is modeled as a stochastic variable whose possible values
belong to the unit interval. Although this failure probability is a
constant defined by the integral in (\ref{pF}), its exact value is
unknown because the integral cannot be evaluated; instead, we must
infer its value from available relevant information. Instead of a
single real number as an estimate, the post-processor, written as
SS+ (``SS-plus'') for short, produces the posterior PDF of the
failure probability, which takes into account both relevant prior
information and the information from the samples generated by SS.
This PDF expresses the relative plausibility of each possible value
of the failure probability based on this information. Since this PDF
quantifies the uncertainty in the value of $p_F$, it can be fully
used in risk analyses (e.g. for life-cycle cost analysis, decision
making under risk, etc.), or it can be used to give a point estimate
such as the most probable value based on the available information.


\section{Subset Simulation}\label{Subset Simulation}

\subsection{Basic idea of Subset Simulation}

The original and best known stochastic simulation algorithm for
estimating high-dimensional integrals is Monte Carlo Simulation
(MCS). In this method the failure probability $p_F$ is estimated by
approximating the mean of $I_F(\theta)$ in (\ref{pF}) by its sample
mean:
\begin{equation}\label{Monte Carlo}
p_F\approx \hat{p}_F^{MC}=\frac{1}{N}\sum_{i=1}^N
I_F(\theta^{(i)}),\hspace{3mm}
\end{equation}
where samples $\theta^{(1)},\ldots,\theta^{(N)}$ are independent and
identically distributed (i.i.d.) samples from $\pi(\cdot)$, denoted
$\theta^{(i)}\stackrel{i.i.d.}{\sim}\pi(\cdot)$. This estimate is
just the fraction of samples that produce system failure according
to a model of the system dynamics. Notice that each evaluation of
$I_F$ requires a deterministic system analysis to be performed to
check whether the sample implies failure. The main advantage of MCS
is that its efficiency does not depend on the dimension $d$ of the
parameter space. Indeed, straightforward calculation shows that the
coefficient of variation (c.o.v) of the Monte Carlo estimate
(\ref{Monte Carlo}), serving as a measure of accuracy in the usual
interpretation of MCS, is given by:
\begin{equation}\label{MC CV}
\delta(\hat{p}_F^{MC})=\sqrt{\frac{1-p_F}{Np_F}}.
\end{equation}
However, MCS has a serious drawback: it is inefficient in estimating
small failure probabilities. If $p_F$ is very small, $p_F\ll1$, then
it follows from (\ref{MC CV}) that the number of samples $N$ (or,
equivalently, number of system analyses) needed to achieve an
acceptable level of accuracy is very large, $N\propto 1/p_F \gg 1$.
This deficiency of MCS has motivated research to develop more
efficient stochastic simulation algorithms for estimating small
failure probabilities in high-dimensions.

The basic idea of Subset Simulation \cite{AuBeck} is the following:
represent a very small failure probability $p_F$ as a product of
larger probabilities so $p_F=\prod_{j=1}^m p_j$, where the factors
$p_j$ are estimated sequentially, $p_j\approx \hat{p}_{j}$, to
obtain an estimate $\hat{p}_F^{SS}$ for $p_F$ as
$\hat{p}_F^{SS}=\prod_{j=1}^m\hat{p}_{j}$. To reach this goal, let
us consider a decreasing sequence of nested subsets of the parameter
space, starting from the entire space and shrinking to the failure
domain $F$:
\begin{equation}
\mathbb{R}^d=F_0\supset F_1 \supset \ldots \supset F_{m-1} \supset
F_m=F.
\end{equation}
Subsets $F_1,\ldots, F_{m-1}$  are called \textit{intermediate
failure domains}. As a result, the failure probability $p_F=P(F)$
can be rewritten as a product of conditional probabilities:
\begin{equation}
p_F=\prod_{j=1}^{m}P(F_{j}|F_{j-1})=\prod_{j=1}^{m}p_j,
\end{equation}
where $p_j=P(F_j|F_{j-1})$ is the conditional probability at the
$(j-1)^{th}$ conditional level. Clearly, by choosing the
intermediate failure domains appropriately, all conditional
probabilities $p_j$ can be made large. Furthermore, they can be
estimated, in principle, by the fraction of independent conditional
samples that cause failure at the intermediate level:
\begin{equation}\label{pjMC}
p_j\approx \hat{p}_j^{MC}=\frac{1}{N}\sum_{i=1}^{N}
I_{F_j}(\theta^{(i)}_{j-1}),\hspace{5mm}
\theta^{(i)}_{j-1}\stackrel{i.i.d.}{\sim}\pi(\cdot|F_{j-1}).
\end{equation}
Hence, the original problem (estimation of the small failure
probability $p_F$) is replaced by a sequence of $m$ intermediate
problems (estimation of the larger failure probabilities $p_j$,
$j=1,\ldots,m$).

The first probability $p_1=P(F_1|F_0)=P(F_1)$ is straightforward to
estimate by MCS, since (\ref{pjMC}) requires sampling from
$\pi(\cdot)$ that is assumed to be readily sampled. However, if
$j\geq2$, to estimate $p_j$ using (\ref{pjMC}) one needs to generate
independent samples from conditional distribution
$\pi(\cdot|F_{j-1})$, which, in general, is not a trivial task. It
is not efficient to use MCS for this purpose, especially at higher
levels, but it can be done by a specifically tailored Markov chain
Monte Carlo technique at the expense of generating dependent
samples.

Markov chain Monte Carlo (MCMC) \cite{Liu,NealMCMC,RobertCasella} is
a class of algorithms for sampling from multi-dimensional target
probability distributions that cannot be directly sampled, at least
not efficiently. These methods are based on constructing a Markov
chain that has the distribution of interest as its stationary
distribution. By simulating samples from the Markov chain, they will
eventually be draws from the target probability distribution but
they will not be independent samples. In Subset Simulation, the
Modified Metropolis algorithm (MMA) \cite{AuBeck}, an MCMC technique
based on the original Metropolis algorithm
\cite{Metropolis,Hastings}, is used for sampling from the
conditional distributions $\pi(\cdot|F_{j-1})$.
\begin{remark} It was observed in \cite{AuBeck} that the original
Metropolis algorithm does not work in high-dimensional conditional
probability spaces, because it produces a Markov chain with very
highly correlated states. The geometrical reasons for this are
discussed in \cite{Geominsight}.
\end{remark}

\subsection{Modified Metropolis algorithm}

Suppose we want to generate a Markov chain with stationary
distribution
\begin{equation}
\pi(\theta|\mathbb{F})=\frac{\pi(\theta)I_{\mathbb{F}}(\theta)}{P(\mathbb{F})}=\frac{\prod_{k=1}^d\pi_k(\theta_k)I_{\mathbb{F}}(\theta)}{P(\mathbb{F})},
\end{equation}
where $\mathbb{F}\subset\mathbb{R}^d$ is a subset of the parameter
space. Without significant loss of generality, we assume here that
$\pi(\theta)=\prod_{k=1}^d\pi_k(\theta_k)$, i.e. components of
$\theta$ are independent (but are not so when conditioned on
$\mathbb{F}$). MMA differs from the original Metropolis algorithm
algorithm in the way the candidate state $\xi=(\xi_1,\ldots,\xi_d)$
is generated. Instead of using a $d$-variate proposal PDF on
$\mathbb{R}^d$ to directly obtain the candidate state, in MMA a
sequence of univariate proposal PDFs is used. Namely, each
coordinate $\xi_k$ of the candidate state is generated separately
using a univariate proposal distribution dependent on the coordinate
$\theta_k$ of the current state. Then a check is made whether the
$d$-variate candidate generated in such a way belongs to the subset
$\mathbb{F}$ in which case it is accepted as the next Markov chain
state; otherwise it is rejected and the current MCMC sample is
repeated. To summarize, the Modified Metropolis algorithm proceeds
as follows:

\vspace{5mm} \hrule height 0.6pt \rule{0pt}{4mm}\centerline
{\textbf{Modified Metropolis algorithm
\cite{AuBeck}}}\rule{0pt}{4mm}
 \hrule
 \vspace{1mm}
 \texttt{Input:}

\hspace{0.5cm}$\vartriangleright$ $\theta^{(1)}\in\mathbb{F}$,
initial state of a Markov chain;

\hspace{0.5cm}$\vartriangleright$ $N$, total number of states, i.e.
samples;

\hspace{0.5cm}$\vartriangleright$
$\pi_1(\cdot),\ldots,\pi_d(\cdot)$, marginal PDFs of
$\theta_1,\ldots,\theta_d$, respectively;

\hspace{0.5cm}$\vartriangleright$
$S_1(\cdot|\alpha),\ldots,S_d(\cdot|\alpha)$, univariate proposal
PDFs depending on a parameter $\alpha\in\mathbb{R}$

\hspace{0.8cm} and satisfying the symmetry property
$S_k(\beta|\alpha)=S_k(\alpha|\beta)$, $k=1,\ldots,d$.

 \texttt{Algorithm:}

\hspace{0.5cm} \textbf{for} $i=1,\ldots,N-1$ \textbf{do}

\hspace{1.2cm}$\%$ Generate a candidate state $\xi$:

\hspace{1.2cm} \textbf{for} $k=1,\ldots, d$ \textbf{do}

\hspace{1.9cm} Sample $\tilde{\xi}_k\sim S_k(\cdot|\theta^{(i)}_k)$

\hspace{1.9cm} Compute the acceptance ratio
\begin{equation}\label{ratio}
r=\frac{\pi_k(\tilde{\xi}_k)}{\pi_k(\theta^{(i)}_k)}
\end{equation}

\hspace{1.9cm} Accept or reject $\tilde{\xi}_k$ by setting
\begin{equation}
\xi_k=\left\{
        \begin{array}{ll}
          \tilde{\xi}_k, & \hbox{with probability } \min\{1,r\}; \\
          \theta^{(i)}_k & \hbox{with probability } 1-\min\{1,r\}.
        \end{array}
      \right.
\end{equation}

\hspace{1.2cm} \textbf{end for}

\hspace{1.2cm} Check whether $\xi\in\mathbb{F}$  by system analysis

\hspace{1.2cm} and accept or reject $\xi$ by setting
\begin{equation}\label{accept reject}
\theta^{(i+1)}=\left\{
                 \begin{array}{ll}
                   \xi, & \hbox{if } \xi\in\mathbb{F}; \\
                   \theta^{(i)}, & \hbox{if } \xi\notin \mathbb{F}.
                 \end{array}
               \right.
\end{equation}

\hspace{0.5cm} \textbf{end for}

\texttt{Output:}

\hspace{0.5cm}$\blacktriangleright$
$\theta^{(1)},\ldots,\theta^{(N)}$, $N$ states of a Markov chain
with stationary distribution $\pi(\cdot|\mathbb{F})$.
 \vspace{1mm}
 \hrule
\vspace{5mm}

Schematically, the Modified Metropolis algorithm is shown in
Figure~1. For completeness and the reader's convenience, the proof
that $\pi(\cdot|\mathbb{F})$ is the stationary distribution for the
Markov chain generated by MMA is given in the Appendix.

\begin{remark} The symmetry property
$S_k(\beta|\alpha)=S_k(\alpha|\beta)$ does not play a critical role.
If $S_k$ does not satisfy this property, by replacing the
``Metropolis'' ratio in (\ref{ratio}) by the ``Metropolis-Hastings''
ratio
\begin{equation}\label{ratioHastings}
r=\frac{\pi_k(\tilde{\xi}_k)S_k(\theta^{(i)}_k|\tilde{\xi}_k)}{\pi_k(\theta^{(i)}_k)S_k(\tilde{\xi}_k|\theta^{(i)}_k)},
\end{equation}
we obtain an MCMC algorithm referred to as the Modified
Metropolis--Hastings algorithm.
\end{remark}

Thus, if we run the Markov chain for sufficiently long (the burn-in
period), starting from essentially any ``seed''
$\theta^{(1)}\in\mathbb{F}$, then for large $N$ the distribution of
$\theta^{(N)}$ will be approximately $\pi(\cdot|\mathbb{F})$. Note,
however, that in any practical application it is very difficult to
check whether the Markov chain has reached its stationary
distribution. If the seed $\theta^{(1)}\sim\pi(\cdot|\mathbb{F})$,
then all states $\theta^{(i)}$ will be automatically distributed
according to the target distribution,
$\theta^{(i)}\sim\pi(\cdot|\mathbb{F})$, since it is the stationary
distribution for the Markov chain. This is called \textit{perfect
sampling} \cite{RobertCasella} and Subset Simulation has this
property because of the way the seeds are chosen \cite{AuBeck}.

Let us assume now that we are given a seed
$\theta^{(1)}_{j-1}\sim\pi(\cdot|F_{j-1})$, where $j=2,\ldots,m$.
Then, using MMA, we can generate a Markov chain with $N$ states
starting from this seed and construct an estimate for $p_j$ similar
to (\ref{pjMC}), where MCS samples are replaced by MCMC samples:
\begin{equation}\label{pjMCMC}
p_j\approx \hat{p}_j^{MCMC}=\frac{1}{N}\sum_{i=1}^{N}
I_{F_j}(\theta^{(i)}_{j-1}),\hspace{5mm}
\theta^{(i)}_{j-1}\sim\pi(\cdot|F_{j-1}).
\end{equation}
Note that all samples $\theta^{(1)}_{j-1},\ldots,\theta^{(N)}_{j-1}$
in (\ref{pjMCMC}) are identically distributed in the stationary
state of the Markov chain, but are not independent. Nevertheless,
these MCMC samples can be used for statistical averaging as if they
were i.i.d., although with some reduction in efficiency \cite{Doob}.
Namely, the more correlated
$\theta^{(1)}_{j-1},\ldots,\theta^{(N)}_{j-1}$ are, the less
efficient is the estimate (\ref{pjMCMC}). The correlation between
successive samples is due to proposal PDFs $S_k$, which govern the
generation of the next state of the Markov chain from the current
one in MMA. Hence, the choice of proposal PDFs $S_k$ controls the
efficiency of estimate (\ref{pjMCMC}), making this choice very
important. It was observed in \cite{AuBeck} that the efficiency of
MMA depends on the spread of proposal distributions, rather then on
their type. Both small and large spreads tend to increase the
dependence between successive samples, slowing the convergence of
the estimator. Large spreads may reduce the acceptance rate in
(\ref{accept reject}), increasing the number of repeated MCMC
samples. Small spreads, on the contrary, may lead to a reasonably
high acceptance rate, but still produce very correlated samples due
to their close proximity. To find the optimal spread of proposal
distributions for MMA is a non-trivial task which is discussed in
Section \ref{Optimal Scaling}.

\begin{remark} In \cite{MMHDR} another modification of the Metropolis algorithm,
called Modified Metropolis--Hastings algorithm with Delayed
Rejection (MMHDR), has been proposed. The key idea behind MMHDR is
to reduce the correlation between states of the Markov chain. A way
to achieve this goal is the following: whenever a candidate $\xi$ is
rejected in (\ref{accept reject}), instead of getting a repeated
sample $\theta^{(i+1)}=\theta^{(i)}$, as the case of MMA, a new
candidate $\tilde{\xi}$ is generated using a new set of proposal
PDFs $\tilde{S}_k$. Of course, the acceptance ratios (\ref{ratio})
for the second candidate have to be adjusted in order to keep the
target distribution stationary. In general, MMHDR generates less
correlated samples than MMA but it is computationally more
expensive.
\end{remark}

\subsection{Subset Simulation algorithm}

Subset Simulation uses the estimates (\ref{pjMC}) for $p_1$ and
(\ref{pjMCMC}) for $p_j$, $j\geq2$, to obtain the estimate for the
failure probability:
\begin{equation}\label{pFSS0}
p_F\approx
\hat{p}_F^{SS}=\hat{p}_1^{MC}\prod_{j=2}^m\hat{p}_j^{MCMC}
\end{equation}

The remaining ingredient of Subset Simulation that we have to
specify is the choice of intermediate failure domains $F_1,\ldots,
F_{m-1}$. Usually, performance of a dynamical system is described by
a certain positive-valued performance function
$g:\mathbb{R}^d\rightarrow \mathbb{R}_+$, for instance, $g(\theta)$
may represent some peak (maximum) response quantity when the system
model is subjected to the uncertain excitation $\theta$. Then the
failure region, i.e. unacceptable performance region, can be defined
as a set of excitations that lead to the exceedance of some
prescribed critical threshold $b$:
\begin{equation}
F=\{\theta\in\mathbb{R}^d : g(\theta)>b\}.
\end{equation}
The sequence of intermediate failure domains can then be defined as
\begin{equation}
F_j=\{\theta\in\mathbb{R}^d : g(\theta)>b_j\},
\end{equation}
where $0<b_1<\ldots<b_{m-1}<b_m=b$. Intermediate threshold values
$b_j$ define the values of the conditional probabilities
$p_j=P(F_j|F_{j-1})$ and, therefore, affect the efficiency of Subset
Simulation. In practical cases it is difficult to make a rational
choice of the $b_j$-values in advance, so the $b_j$ are chosen
adaptively (see (\ref{bj}) below) so that the estimated conditional
probabilities are equal to a fixed value $p_0\in(0,1)$. We will
refer to $p_0$ as the \textit{conditional failure probability}.

\vspace{5mm} \hrule height 0.6pt \rule{0pt}{4mm}\centerline
{\textbf{Subset Simulation algorithm \cite{AuBeck}}}\rule{0pt}{4mm}
 \hrule
 \vspace{1mm}
 \texttt{Input:}

\hspace{0.5cm}$\vartriangleright$ $p_0$, conditional failure
probability;

\hspace{0.5cm}$\vartriangleright$ $N$, number of samples per
conditional level.

\texttt{Algorithm:}

\hspace{0.5cm} Set $j=0$, number of conditional level

\hspace{0.5cm} Set $N_F(j)=0$, number of failure samples at level $j$

\hspace{0.5cm} Sample
$\theta_0^{(1)},\ldots,\theta_0^{(N)}\stackrel{i.i.d.}{\sim}\pi(\cdot)$

\hspace{0.5cm} \textbf{for} $i=1,\ldots,N$ \textbf{do}

\hspace{1.2cm} \textbf{if} $g^{(i)}=g(\theta_0^{(i)})>b$ \textbf{do}

\hspace{1.6cm} $N_F(j)\leftarrow N_F(j)+1$

\hspace{1.2cm} \textbf{end if}

\hspace{0.5cm} \textbf{end for}

\hspace{0.5cm} \textbf{while} $N_F(j)/N<p_0$ \textbf{do}

\hspace{1.7cm} $j\leftarrow j+1$

\hspace{1.7cm} Sort $\{g^{(i)}\}$: $g^{(i_1)}\leq
g^{(i_2)}\leq\ldots\leq g^{(i_N)}$

\hspace{1.7cm} Define
\begin{equation}\label{bj}b_j=\frac{g^{(i_{N-Np_0})}+g^{(i_{N-Np_0+1})}}{2}\end{equation}

\hspace{1.7cm} \textbf{for} $k=1,\ldots,Np_0$ \textbf{do}

\hspace{2.4cm} Starting from
$\theta_j^{(1),k}=\theta_{j-1}^{(i_{N-Np_0+k})}\sim\pi(\cdot|F_j)$,
generate $1/p_0$ states

\hspace{2.4cm} of a Markov chain
$\theta_j^{(1),k},\ldots,\theta_j^{(1/p_0),k}\sim\pi(\cdot|F_j)$,
using MMA.

\hspace{1.7cm} \textbf{end for}

\hspace{1.7cm} Renumber:
$\{\theta_j^{(i),k}\}_{k=1,i=1}^{Np_0,1/p_0} \mapsto
\theta_j^{(1)},\ldots,\theta_j^{(N)}\sim\pi(\cdot|F_j)$

\hspace{1.7cm} \textbf{for} $i=1,\ldots,N$ \textbf{do}

\hspace{2.4cm} \textbf{if} $g^{(i)}=g(\theta_j^{(i)})>b$ \textbf{do}

\hspace{2.8cm} $N_F(j)\leftarrow N_F(j)+1$

\hspace{2.4cm} \textbf{end if}

\hspace{1.7cm} \textbf{end for}

\hspace{0.5cm} \textbf{end while}

 \texttt{Output:}

\hspace{0.5cm}$\blacktriangleright$ $\hat{p}_F^{SS}$, estimate of
$p_F$:
\begin{equation}\label{pFSS}
\hat{p}_F^{SS}=p_0^j\frac{N_F(j)}{N}
\end{equation}
 \vspace{1mm}
 \hrule
\vspace{5mm}

Schematically, the Subset Simulation algorithm is shown in
Figure~\ref{SSalg}.

The adaptive choice of $b_j$-values in (\ref{bj}) guarantees, first,
that all seeds $\theta_{j}^{(1),k}$ are distributed according to
$\pi(\cdot|F_j)$ and, second, that the estimated conditional
probability $P(F_j|F_{j-1})$ is equal to $p_0$. Here, for
convenience, $p_0$ is assumed to be chosen such that $Np_0$ and
$1/p_0$ are positive integers, although this is not strictly
necessary. In \cite{AuBeck} it is suggested to use $p_0=0.1$. The
optimal choice of conditional failure probability is discussed in
the next section.

\begin{remark} Subset Simulation provides an efficient stochastic
simulation algorithm for computing failure probabilities for general
reliability problems without using any specific information about
the dynamic system other than an input-output model. This
independence of a system's inherent properties makes Subset
Simulation potentially useful for applications in different areas of
science and engineering where the notion of ``failure'' has its own
specific meaning, e.g. in Computational Finance to estimate the
probability that a stock price will drop below a given threshold
within a given period of time, in Computational Biology to estimate
the probability of gene mutation, etc.
\end{remark}

\section{Tuning of the Modified Metropolis algorithm}\label{Optimal Scaling}

The efficiency and accuracy of Subset Simulation directly depends on
the ergodic properties of the Markov chain generated by the Modified
Metropolis algorithm; in other words, on how fast the chain explores
the parameter space and converges to its stationary distribution.
The latter is determined by the choice of one-dimensional proposal
distributions $S_k$, which makes this choice very important. In
spite of this, the choice of proposal PDFs is still largely an art.
It was observed in \cite{AuBeck} that the efficiency of MMA is not
sensitive to the type of the proposal PDFs; however, it depends on
their spread (e.g. their variance).

Optimal scaling refers to the need to tune the parameters of the
algorithm to make the resulting Markov chain converge to
stationarity as fast as possible. The issue of optimal scaling was
recognized in the original paper by Metropolis et al
\cite{Metropolis}. Gelman, Roberts, and Gilks \cite{GRG} were the
first authors to obtain theoretical results on the optimal scaling
of the original Metropolis algorithm. They proved that for optimal
sampling from a high-dimensional Gaussian distribution, the
Metropolis algorithm should be tuned to accept approximately $23\%$
of the proposed moves only. Since then many papers have been
published on optimal scaling of the original Metropolis algorithm.
In this section, in the spirit of \cite{GRG}, we address the
following question which is of high practical importance: what is
the optimal variance of the univariate Gaussian proposal PDFs for
simulating a high-dimensional Gaussian distribution conditional on
some specific domain using MMA?

This section is organized as follows: in Subsection \ref{History} we
recall the original Metropolis algorithm and provide a brief
overview of existing results on its optimal scaling; 
in Subsection \ref{MMscaling} we present a collection of
observations on the optimal scaling of the Modified Metropolis
algorithm for different numerical examples, and discuss the optimal
scaling strategy for MMA when it is employed within Subset
Simulation for estimating small failure probabilities.

\subsection{Metropolis algorithm: a brief history of its optimal scaling}\label{History}

The Metropolis algorithm is the most popular class of MCMC
algorithms. Let $\pi$ be the target PDF on $\mathbb{R}^d$;
$\theta^{(n)}$ be the current state of the Markov chain; and
$S(\cdot|\theta^{(n)})$ be a symmetric (i.e.
$S(\alpha|\beta)=S(\beta|\alpha)$) $d$-variate proposal PDF
depending on $\theta^{(n)}$. Then the  Metropolis update
$\theta^{(n)}\rightarrow\theta^{(n+1)}$ of the Markov chain works as
follows: first, simulate a candidate state $\xi\sim
S(\cdot|\theta^{(n)})$; next, compute the acceptance probability
$a(\xi|\theta^{(n)})=\min\{1,\pi(\xi)/\pi(\theta^{(n)})\}$; and,
finally, accept $\xi$ as a next state of the Markov chain, i.e. set
$\theta^{(n+1)}=\xi$, with probability $a(\xi|\theta^{(n)})$ or
reject $\xi$ by setting $\theta^{(n+1)}=\theta^{(n)}$ with the
remaining probability $1-a(\xi|\theta^{(n)})$. It easy to prove that
such updating leaves $\pi$ invariant, i.e. if $\theta^{(n)}$ is
distributed according to $\pi$, then so is $\theta^{(n+1)}$. Hence
the chain will eventually converge to $\pi$ as its stationary
distribution. Practically it means that if we run the Markov chain
for a long time, starting from any $\theta^{(1)}\in\mathbb{R}^d$,
then for large $N$ the distribution of $\theta^{(N)}$ will be
approximately $\pi$.

The variance $\sigma^2$ of the proposal PDF $S$ turns out to have a
significant impact on the speed of convergence of the Markov chain
to its stationary distribution. Indeed, if $\sigma^2$ is small, then
the Markov chain explores its state space very slowly. On the other
hand, if $\sigma^2$ is large, the probability to accept a new
candidate state is very low and this results in a chain remaining
still for long periods of time. Since the Metropolis algorithm with
extremal values of the variance of the proposal PDF produce a chain
that explores its state space slowly, it is natural to expect the
existence of an optimal $\sigma^2$ for which the convergence speed
is maximized.

The importance of optimal scaling was already realized in the
landmark paper \cite{Metropolis} where the Metropolis algorithm was
first introduced. Metropolis et al developed an algorithm for
generating samples from the Boltzmann distribution for solving
numerical problems in statistical mechanics. In this work the
uniform proposal PDF was used,
$S(\xi|\theta^{(n)})=U_{[\theta^{(n)}-\alpha,\theta^{(n)}+\alpha]}(\xi)$,
and it was noted:
\begin{quote}
``It may be mentioned in this connection that the maximum
displacement $\alpha$ must be chosen with some care; if too large,
most moves will be forbidden, and if too small, the configuration
will not change enough. In either case it will then take longer to
come to equilibrium.''
\end{quote}

In \cite{Hastings} Hastings generalized the Metropolis algorithm.
Namely, he showed that the proposal distribution need not be
uniform, and it need not to be symmetric. In the latter case, the
acceptance probability must be slightly modified:
$a(\xi|\theta^{(n)})=\min\left\{1,\frac{\pi(\xi)S(\theta^{(n)}|\xi)}{\pi(\theta^{(n)})S(\xi|\theta^{(n)})}\right\}$.
The corresponding algorithm is called the Metropolis-Hastings
algorithm. Furthermore, Hastings emphasized that the original
sampling method has a general nature and can be applied in different
circumstances (not only in the framework of statistical mechanics)
and that Markov chain theory (which is absent in \cite{Metropolis})
is a natural language for the algorithm. Among other insights,
Hastings made the following useful yet difficult to implement
recommendation:
\begin{quote}
Choose a proposal distribution ``so that the sample point in one
step may move as large a distance as possible in the sample space,
consistent with a low rejection rate.''
\end{quote}

Historically, the tuning of the proposal's variance was usually
performed by trial-and-error, typically using rules of thumb of the
following form: select $\sigma^2$ such that the corresponding
acceptance rate, i.e. the average number of accepted candidate
states, is between $30\%$ and $70\%$. The rationale behind such rule
is that too low an acceptance rate means that the Markov chain has
many repeated samples, while too high an acceptance rate indicates
that the chain moves very slowly. Although qualitatively correct,
these rules suffered from the lack of theoretical justification for
the lower and upper bounds for the acceptance rate. The first
theoretical result on the optimal scaling of the Metropolis
algorithm was obtained by Gelman et al \cite{GRG}. It was proved
that in order to perform optimally in high dimensional spaces, the
algorithm should be tuned to accept as small as $23\%$ of the
proposed moves. This came as an unexpected and counter-intuitive
result. Indeed, this states that the Markov chain should stay still
about $77\%$ of the time in order to have the fastest convergence
speed. Let us formulate the main result more precisely.

Suppose that all components of $\theta\in\mathbb{R}^d$ are i.i.d,
i.e. the target distribution $\pi(\theta)$ has the product form,
$\pi(\theta)=\prod_{i=1}^d f(\theta_i)$, where the one-dimensional
density $f$ satisfies certain regularity conditions (namely, $f$ is
a $C^2$-function and $(\log f)'$ is Lipschitz continuous). Then the
\textit{optimal} ``random walk'' Gaussian proposal PDF
$S(\xi|\theta^{(n)})=\mathcal{N}(\xi|\theta^{(n)},\sigma^2\mathbb{I}_d)$
has the following properties:
\begin{enumerate}
  \item The optimal standard deviation is $\sigma\approx 2.4/\sqrt{dI}$, where
$I=\mathbb{E}_f[((\log
f)')^2]=\int_{-\infty}^\infty\frac{(f'(x))^2}{f(x)}dx$  measures the
``roughness'' of $f$. The smoother the density is, the smaller $I$
is, and, therefore, the larger $\sigma$ is. In particular, for a
one-dimensional case ($d=1$) and \textit{standard} Gaussian $f$
($I=1$): $\sigma\approx2.4$ (a surprisingly high value!);
  \item The acceptance rate of the corresponding Metropolis algorithm is approximately $44\%$ for $d=1$
and declines to $23\%$ when $d\rightarrow\infty$. Moreover, the
asymptotic optimality of accepting  $23\%$ of proposed moves is
approximately true for dimension as low as $d=6$.

\end{enumerate}

This result gives rise to the following useful heuristic strategy,
that is easy to implement: tune the proposal variance so that the
average acceptance rate is roughly $25\%$. In spite of the i.i.d.
assumption for the target components, this result is believed to be
robust and to hold under various perturbations of the target
distribution. Being aware of practical difficulties of choosing the
optimal $\sigma^2$, Gelman et al provided a very useful observation:

\begin{quote}
    ``Interestingly, if one cannot be optimal, it seems better to use too high a value of $\sigma$ than too
low.''
\end{quote}

\begin{remark} This observation is consistent with the numerical result
obtained recently in \cite{MMHDR}: an increased variance of the
second stage proposal PDFs improves the performance of the MMHDR
algorithm.
\end{remark}

Since the pioneering work \cite{GRG}, the problem of optimal scaling
has attracted the attention of many researchers and optimal scaling
results have been derived for other types of MCMC algorithms. For
instance, the Metropolis-adapted Langevin algorithm (MALA) was
studied in \cite{RobertsRosental} and it was proved that the
asymptotically optimal acceptance rate for MALA is approximately
$57\%$. For a more detailed overview of existing results on the
optimal scaling of the Metropolis algorithm see \cite{Bedard} and
references cited therein.


\subsection{Optimal scaling of the Modified Metropolis algorithm}\label{MMscaling}

In this subsection we address two questions: what is the optimal
variance $\sigma^2$ of the univariate Gaussian proposal PDFs
$S_k(\cdot|\mu)=\mathcal{N}(\cdot|\mu,\sigma^2)$, $k=1,\ldots,d$ for
simulating a high-dimensional conditional Gaussian distribution
$\pi(\cdot|{F})=\mathcal{N}(\cdot|0,\mathbb{I}_d)I_{{F}}(\cdot)/P({F})$
using the Modified Metropolis algorithm and what is the optimal
scaling strategy for Modified Metropolis when it is employed within
Subset Simulation for estimating small failure probabilities?

Let us first define what we mean by ``optimal'' variance. Let
$\theta^{(i),k}_{j-1}$ be the the $i^{\textrm{th}}$ sample in the
$k^{\textrm{th}}$ Markov chain at simulation level $j-1$. The
conditional probability $p_j=P(F_j|F_{j-1})$ is then estimated as
follows:
\begin{equation}\label{pjMCMC2}
p_j\approx \hat{p}_j=\frac{1}{N}\sum_{k=1}^{N_c}\sum_{i=1}^{N_s}
I_{F_j}(\theta^{{(i)},k}_{j-1}),\hspace{5mm}
\theta^{{(i)},k}_{j-1}\sim\pi(\cdot|F_{j-1}),
\end{equation}
where $N_c$ is the number of Markov chains and $N_s$ is the total
number of samples simulated from each of these chains, $N_s=N/N_c$,
so that the total number of Markov chain samples is $N$. An
expression for the coefficient of variation (c.o.v.) of $\hat{p}_j$,
derived in \cite{AuBeck}, is given by:
\begin{equation}\label{covpj}
    \delta_j=\sqrt{\frac{1-p_j}{Np_j}(1+\gamma_j)},
\end{equation}
where
\begin{equation}\label{gamma_j}
    \gamma_j=2\sum_{i=1}^{N_s-1}\left(1-\frac{i}{N_s}\right)\frac{R^{(i)}_j}{R^{(0)}_j},
\end{equation}
and
\begin{equation}\label{R}
    R^{(i)}_j=\mathbb{E}[I_{F_j}(\theta_{j-1}^{(1),k})I_{F_j}(\theta_{j-1}^{(1+i),k})]-p_j^2
\end{equation}
is the autocovariance of the stationary stochastic process
$X(i)=I_{F_j}(\theta_{j-1}^{(i),k})$ at lag $i$. The term
$\sqrt{(1-p_j)/Np_j}$ in (\ref{covpj}) is the c.o.v. of the MCS
estimator with $N$ independent samples. The c.o.v. of $\hat{p}_j$
can thus be considered as the one in MCS with an effective number of
independent samples $N/(1+\gamma_j)$. The efficiency of the
estimator using dependent MCMC samples ($\gamma_j>0$) is therefore
reduced compared to the case when the samples are independent
($\gamma_j=0$). Hence, $\gamma_j$ given by (\ref{gamma_j}) can be
considered as a measure of correlation between the states of a
Markov chain and smaller values of $\gamma_j$ imply higher
efficiency.

\begin{remark} Formula (\ref{covpj}) was derived
assuming that the Markov chain generated according to MMA is ergodic
and that the samples generated by different chains are uncorrelated
through the indicator function, i.e.
$\mathbb{E}[I_{F_j}(\theta)I_{F_j}(\theta')]-p_j^2=0$ if $\theta$
and $\theta'$ are from different chains. The latter, however, may
not be always true, since the seeds for each chain may be dependent.
Nevertheless, the expression in (\ref{covpj}) provides a useful
theoretical description of the c.o.v. of $\hat{p}_j$.
\end{remark}

\begin{remark}The autocovariance sequence $R^{(i)}_j$, $i=0,\ldots,N_s-1$,
needed for calculation of $\gamma_j$, can be estimated using the
Markov chain samples at the $(j-1)^{\textrm{th}}$ level by:

\begin{equation}\label{Rapprox}
R^{(i)}_j\approx
\frac{1}{N-iN_c}\sum_{k=1}^{N_c}\sum_{i'=1}^{N_s-i}I_{F_j}(\theta_{j-1}^{(i'),k})I_{F_j}(\theta_{j-1}^{(i'+i),k})
-\hat{p}_j^2.
\end{equation}
\end{remark}

Note that in general, $\gamma_j$ depends on the number of samples
$N_s$ in the Markov chain, the conditional probability $p_j$, the
intermediate failure domains $F_{j-1}$ and $F_j$, and the standard
deviation $\sigma_j$ of the  proposal PDFs
$S_k(\cdot|\mu)=\mathcal{N}(\cdot|\mu,\sigma^2_j)$. According to the
``basic'' description of the Subset Simulation algorithm given in
Section \ref{Subset Simulation}, $p_j=p_0$ for all $j$ and
$N_s=1/p_0$. The latter, as it has been already mentioned, is not
strictly necessary, yet convenient. In this subsection, the value
$p_0$ is chosen to be $0.1$, as in the original paper \cite{AuBeck}.
In this setting,  $\gamma_j$ depends only on the standard deviation
$\sigma_j$ and the geometry of $F_{j-1}$ and $F_j$. For a given
reliability problem (i.e. for a given performance function $g$ that
defines domains $F_j$ for all $j$), $\sigma_j^{\mathrm{opt}}$ is
said to be the \textit{optimal spread} of the proposal PDFs at level
$j$, if it minimizes the value of $\gamma_j$:
\begin{equation}\label{optSigma}
    \sigma_j^{\mathrm{opt}}=\arg\min_{\sigma_j>0}\gamma_j(\sigma_j)
\end{equation}
We will refer to $\gamma_j=\gamma_j(\sigma_j)$ as
\textit{$\gamma$-efficiency} of the Modified Metropolis algorithm
with proposal PDFs $\mathcal{N}(\cdot|\mu,\sigma^2_j)$ at level $j$.

Consider two examples of the sequence of intermediate failure
domains.
\begin{example}[Exterior of a ball] Let $\theta=re\in\mathbb{R}^d$,
where $e$ is a unit vector and $r=\|\theta\|$. For many reasonable
performance functions $g$, if $r$ is large enough, then $\theta\in
F=\{\theta\in\mathbb{R}^d : g(\theta)>b\}$, i.e. $\theta$ is a
failure point, regardless of $e$. Therefore, an exterior of a ball,
$\bar{B}_r=\{\theta\in\mathbb{R}^d : \|\theta\|\geq r\}$, can serve
as an idealized model of many failure domains. Define the
intermediate failure domains as follows:
\begin{equation}\label{ball}
F_j=\bar{B}_{r_j},
\end{equation}
where the radii $r_j$ are chosen such that $P(F_j|F_{j-1})=p_0$, i.e
$r_j^2=F^{-1}_{\chi^2_d}(1-p_0^{-j})$, where $F_{\chi^2_d}$ denotes
the cumulative distribution function (CDF) of the chi-square
distribution with $d$ degrees of freedom. The dimension $d$  is
chosen to be $10^3$.
\end{example}
\begin{example}[Linear case] Consider a linear reliability problem
with performance function $g(\theta)=a^T\theta+b$, where
$a\in\mathbb{R}^d$ and $b\in\mathbb{R}$ are fixed coefficients. The
corresponding intermediate failure domains $F_j$ are half-spaces
defined as follows:
\begin{equation}\label{linear}
F_j=\{\theta\in\mathbb{R}^d : \langle\theta,e_a\rangle\geq
\beta_j\},
\end{equation}
where $e_a=\frac{a}{\|a\|}$ is the unit normal to the hyperplane
specified by $g$,  and the values of $\beta_j$ are chosen such that
$P(F_j|F_{j-1})=p_0$, i.e $\beta_j=\Phi^{-1}(1-p_0^{-j})$, where
$\Phi$ denotes the CDF of the standard normal distribution. The
dimension $d$ is chosen to be $10^3$.
\end{example}

For both examples, $\gamma_j$ as a function of $\sigma_j$ is plotted
in Fig.~\ref{sigma_vs_gamma} and the approximate values of the
optimal spread $\sigma_j^{\mathrm{opt}}$ are given in Table~1 for
simulation levels $j=1,\ldots,6$. As expected, the optimal spread
$\sigma_j^{\mathrm{opt}}$ decreases when $j$ increases, and based on
the numerical values in Table~1, $\sigma_j^{\mathrm{opt}}$ seems to
converge to approximately $0.3$ and $0.4$ in Example $1$ and $2$,
respectively. The following
 properties of the function
$\gamma_j=\gamma_j(\sigma_j)$ are worth mentioning:
\begin{enumerate}
  \item[(i)] $\gamma_j$ increases very rapidly, when $\sigma_j$ goes to zero;
  \item[(ii)] $\gamma_j$ has a deep trough around the optimal value
$\sigma_j^{\mathrm{opt}}$, when $j$ is large (e.g., $j\geq4$).
\end{enumerate}
Interestingly, these observations are consistent with the statement
given in \cite{GRG} and cited above: if one cannot be optimal (due
to (ii), it is indeed difficult to achieve optimality), it is better
to use too high a value of $\sigma_j$ than too low.

The question of interest now is what gain in efficiency can we
achieve for a proper scaling of the Modified Metropolis algorithm
when calculating small failure probabilities? We consider the
following values of failure probability: $p_F=10^{-k}$,
$k=2,\ldots,6$. The c.o.v. of the failure probability estimates
obtained by Subset Simulation are given in Fig.~\ref{covBall} and
Fig.~\ref{covLinear} for Examples $1$ and $2$, respectively. The
dashed (solid) curves correspond to the case when $N=300$ ($N=1000$)
samples are used per each intermediate failure region. For
estimation of each value of the failure probability, two different
MMAs are used within SS: the optimal algorithm with
$\sigma_j=\sigma_j^{\mathrm{opt}}$ (marked with stars); and the
reference algorithm with $\sigma_j=1$ (marked with squares). The
corresponding c.o.v's are denoted by $\delta_{\mathrm{opt}}$ and
$\delta_1$, respectively. From Fig.~\ref{covBall} and
Fig.~\ref{covLinear} it follows that the smaller $p_F$, the more
important to scale MMA optimally. When $p_F=10^{-6}$, the optimal
c.o.v $\delta_{\mathrm{opt}}$ is approximately 
$80\%$  of the reference c.o.v. $\delta_1$ for both examples, when
$N=1000$. 

Despite its obvious usefulness, the optimal scaling of the Modified
Metropolis algorithm is difficult to achieve in practice. First, as
shown in Table~1, the values of the optimal spread
$\sigma_j^{\mathrm{opt}}$ are different for different reliability
problems. Second, even for a given reliability problem, to find
$\sigma_j^{\mathrm{opt}}$ is computationally expensive because of
property (ii) of $\gamma_j$; and our simulation results show that
the qualitative properties (i) and (ii) generally hold for different
reliability problems, not only for Examples $1$ and $2$. Therefore,
we look for heuristic to choose $\sigma_j$ that is easy to implement
and yet gives near optimal behavior.

It has been recognized for a long time that, when using an MCMC
algorithm, it is useful to monitor its acceptance rate. Both
$\gamma$-efficiency $\gamma_j$ and the acceptance rate $\rho_j$ at
level $j$ depend on $\sigma_j$. For Examples $1$ and $2$, the
approximate values of the acceptance rate that corresponds to the
reference value $\sigma_j=1$ and the optimal spread
$\sigma_j=\sigma_j^{\mathrm{opt}}$ are given in Table 2; $\gamma_j$
as a function of $\rho_j$ is plotted in Fig.~\ref{accrate_vs_gamma},
for simulation levels $j=1,\ldots,6$. A key observation is that,
contrary to (ii), $\gamma_j$ is very flat around the optimal
acceptance rate $\rho_j^{\mathrm{opt}}$, which is defined as the
acceptance rate that corresponds to the optimal spread, i.e.
$\rho_j^{\mathrm{opt}}=\rho_j(\sigma_j^{\mathrm{opt}})$.
Furthermore, according to our simulation results this behavior is
typical, and not specific just for the considered examples. This
observation gives rise to the following heuristic scaling strategy:
\begin{quote}
\textit{At simulation level $j\geq1$ select $\sigma_j$ such that the
corresponding acceptance rate $\rho_j$ is between $30\%$ and
$50\%$}.
\end{quote}

This strategy is  easy to implement in the context of Subset
Simulation. At each simulation level $j$, $N_c$ Markov chains are
generated. Suppose, we do not know the optimal spread
$\sigma_j^{\mathrm{opt}}$ for our problem. We start with a reference
value, say $\sigma_j^{1:n}=1$, for the first $n$ chains. Based only
on these $n$ chains, we calculate the corresponding acceptance rate
$\rho_j^{1:n}$. If $\rho_j^{1:n}$ is too low (i.e. it is smaller
than $30\%$) we decrease the spread and use
$\sigma_j^{n+1:2n}<\sigma_j^{1:n}$ for the next $n$ chains. If
$\rho_j^{1:n}$ is too large (i.e. it is larger than $50\%$) we
increase the spread and use $\sigma_j^{n+1:2n}>\sigma_j^{1:n}$ for
the next $n$ chains. We proceed like this until all $N_c$ Markov
chains have been generated. Note that according to this procedure,
$\sigma_j$ is kept constant within a single chain and it is changed
only between chains. Hence the Markovian property is not destroyed.
The described strategy guaranties that the corresponding scaling on
the Modified Metropolis algorithm is nearly optimal.

\section{Optimal choice of conditional failure probability $p_0$}\label{optimal p0}
The parameter $p_0$ governs how many intermediate failure domains
$F_j$ are needed to reach the target failure domain $F$, which in
turn affects the efficiency of Subset Simulation. A very small value
of the conditional failure probability means that fewer intermediate
levels are needed to reach $F$ but it results in a very large number
of samples $N$ needed at each level for accurate estimation of the
small conditional probabilities $p_j=P(F_j|F_{j-1})$. In the extreme
case when $p_0\leq p_F$, Subset Simulation reduces to the standard
Monte Carlo simulation. On the other hand, increasing the value of
$p_0$ will mean fewer samples are needed for accurate estimation at
each level but it will increase the number of intermediate
conditional levels $m$. In this section we provide a theoretical
basis for the optimal value of the conditional failure probability.

We wish to choose the value $p_0$ such that the coefficient of
variation (c.o.v) of the failure probability estimator
$\hat{p}_F^{SS}$ is as small as possible, for the same total number
of samples. In \cite{AuBeck}, an analysis of the statistical
properties of the Subset Simulation estimator is given. If the
conditional failure domains are chosen so that the corresponding
estimates of the conditional probabilities are all equal to $p_0$
and the same number of samples $N$ is used in the simulation at each
conditional level, then the c.o.v. of the estimator $\hat{p}_F^{SS}$
for a failure probability $p_F=p_0^m$ (requiring $m$ conditional
levels) is approximated by
\begin{equation}\label{CV}
    \delta^2\approx\frac{m(1-p_0)}{Np_0}(1+\bar{\gamma}),
\end{equation}
where $\bar{\gamma}$ is the average correlation factor over all
levels (assumed to be insensitive to $p_0$) that reflects the
correlation among the MCMC samples in each level and depends on the
choice of the spread of the proposal PDFs. Since the total number of
samples $N_T=mN$ and the number of conditional levels $m=\log
p_F/\log p_0$, (\ref{CV}) can be rewritten as follows:
\begin{equation}\label{CV2}
    \delta^2\approx\frac{1-p_0}{p_0(\log p_0)^2}\times \frac{(\log
p_F)^2}{N_T}(1+\bar{\gamma}).
\end{equation}
Note that for given target failure probability $p_F$ and the total
number of samples $N_T$, the second factor in (\ref{CV2}) does not
depend on $p_0$. Thus, minimizing the first factor to minimize the
c.o.v. $\delta$ yields the optimal value as $p_0^{opt}\approx0.2$.
Figure~\ref{deltavsp0} shows the variation of $\delta$ as a function
of $p_0$ according to (\ref{CV2}) for $p_F=10^{-3}$, $N_T=2000$, and
$\bar{\gamma}=0,2,4,6,8,10$. 
This figure indicates that $\delta$ is relatively insensitive to
$p_0$ around its optimal value. Note that the shape of the trend is
invariant with respect to $p_F$, $N_T$ and $\bar{\gamma}$ because
their effects are multiplicative. The figure shows that choosing
$0.1\leq p_0\leq0.3$ will practically lead to similar efficiency and
it is not necessary to fine tune the value of the conditional
failure probability $p_0$ as long as Subset Simulation is
implemented properly.

\section{Bayesian Post-Processor for Subset Simulation}\label{Bayesian Subset
Simulation}

In this section we develop a Bayesian post-processor SS+ for the
original Subset Simulation algorithm described in Section
\ref{Subset Simulation}, that provides more information about the
value of $p_F$ than a single point estimate.

Recall that in SS the failure probability $p_F$ is represented as a
product of conditional probabilities $p_j=P(F_j|F_{j-1})$, each of
which is estimated using (\ref{pjMC}) for $j=1$ and (\ref{pjMCMC})
for $j=2,\ldots,m$. Let $n_j$ denote the number of samples
$\theta^{(1)}_{j-1},\ldots,\theta^{(N)}_{j-1}$ that belong to subset
$F_j$. The estimate for probability $p_j$ is then:
\begin{equation}\label{pjn+}
\hat{p}_j=\frac{n_j}{N}
\end{equation}
and the estimate for the failure probability defined by (\ref{pF})
is:
\begin{equation}\label{pFSS1}
\hat{p}_F^{SS}=\prod_{j=1}^m\hat{p}_j=\prod_{j=1}^m\frac{n_j}{N}.
\end{equation}
In order to construct a Bayesian post-processor for SS, we have to
replace the frequentist estimates (\ref{pjn+}) in (\ref{pFSS1}) by
their Bayesian analogs. In other words, we have to treat all
$p_1,\ldots,p_m$ and $p_F$ as \textit{stochastic variables} and,
following the Bayesian approach, proceed as follows:
\begin{enumerate}
  \item Specify prior PDFs $p(p_j)$ for all $p_j=P(F_j|F_{j-1})$, $j=1,\ldots,m$;
  \item Update each prior PDF, using new data
$\mathcal{D}_{j-1}=\{\theta_{j-1}^{(1)},\ldots,\theta_{j-1}^{(N)}\sim\pi(\cdot|F_{j-1})\}$,
i.e. find the posterior PDFs $p(p_j|\mathcal{D}_{j-1})$ via Bayes'
theorem;
  \item Obtain the posterior PDF $p(p_F|\cup_{j=0}^{m-1}\mathcal{D}_j)$
of $p_F=\prod_{j=1}^mp_j$ from $p(p_1|\mathcal{D}_{0}),\ldots,$
$p(p_m|\mathcal{D}_{m-1})$.
\end{enumerate}
\begin{remark} The term ``stochastic variable'' is used rather then
``random variable'' to emphasize that it is a variable whose value
is uncertain, not random, based on the limited information that we
have available, and for which a probability model is chosen to
describe the relative plausibility of each of its
possible values \cite{Beck,Jaynes3}. 
The failure probability $p_F$ is a constant given by (\ref{pF})
which lies in $[0,1]$ but its exact value is unknown because the
integral cannot be evaluated exactly, and so we quantify the
plausibility of its values based on the samples that probe the
performance function $g$.
\end{remark}

To choose the prior distribution for each $p_j$, we use the
\textit{Principle of Maximum Entropy} (PME), introduced by Jaynes
\cite{Jaynes1}. The PME postulates that, subject to specified
constraints, the prior PDF $p$ which should be taken to represent
the prior state of knowledge is the one that gives the largest
measure of uncertainty, i.e. maximizes Shannon's entropy which for a
continuous variable is given by $H(p)=-\int_{-\infty}^\infty
p(x)\log p(x)dx$. Since the set of all possible values for each
stochastic variable $p_j$ is the unit interval, we impose this as
the only constraint for $p(p_j)$, i.e. $\mbox{supp}\;p(p_j)=[0,1]$.
It is well known that the uniform distribution  is the maximum
entropy distribution among all continuous distributions on $[0,1]$,
so
\begin{equation}\label{prior}
p(p_j)=1, \hspace{5mm} 0\leq p_j\leq1.
\end{equation}
\begin{remark} We could choose a more informative prior PDF, perhaps based on previous experience with the failure probabilities for similar systems.
If the amount of data is large (i.e. $N$ is large), however, then
the effect of the prior PDF on the posterior PDF will be negligible
if the likelihood function has a unique global maximum. This
phenomenon is usually referred to in the literature as the
``stability'' or ``robustness'' of Bayesian estimation.
\end{remark}

Since initial samples $\theta_0^{(1)},\ldots,\theta_0^{(N)}$ are
i.i.d. according to $\pi$, the sequence of zeros and ones,
$I_{F_1}(\theta_0^{(1)}),\ldots,I_{F_1}(\theta_0^{(N)})$, can be
considered as Bernoulli trials and, therefore, the likelihood
function $p(\mathcal{D}_0|p_1)$ is a binomial distribution where
$\mathcal{D}_0$ consists of the number of $F_1$-failure samples
$n_1=\sum_{k=1}^NI_{F_1}(\theta_0^{(k)})$ and the total number of
samples is $N$. Hence, the posterior distribution of $p_1$ is the
beta distribution $\mathcal{B}e(n_1+1,N-n_1+1)$ (e.g. \cite{Gelman})
with parameters $(n_1+1)$ and $(N-n_1+1)$, i.e.
\begin{equation}\label{f1}
p(p_1|\mathcal{D}_0)=\frac{p_1^{n_1}(1-p_1)^{N-n_1}}{\mathrm{B}(n_1+1,N-n_1+1)},
\end{equation}
which is actually the original Bayes' result \cite{Bayes}. The beta
function $\mathrm{B}$ in (\ref{f1}) is a normalizing constant. If
$j\geq2$, all MCMC samples
$\theta_{j-1}^{(1)},\ldots,\theta_{j-1}^{(N)}$ are distributed
according to $\pi(\cdot|F_{j-1})$, however, they are not
independent. Nevertheless, analogously to the frequentist case,
where we used these samples for statistical averaging (\ref{pjMCMC})
as if they were i.i.d., we can use an expression similar to
(\ref{f1}) as a good approximation for the posterior PDF
$p(p_j|\mathcal{D}_{j-1})$ for $j\geq2$, so:
\begin{equation}\label{fjposterior}
p(p_j|\mathcal{D}_{j-1})\approx\frac{p_j^{n_j}(1-p_j)^{N-n_j}}{\mathrm{B}(n_j+1,N-n_j+1)},
\hspace{5mm}j\geq1,
\end{equation}
where $n_j=\sum_{k=1}^NI_{F_j}(\theta_{j-1}^{(k)})$ is the number of
$F_j$-failure samples. Note that in Subset Simulation the MCMC
samples $\theta_{j-1}^{(1)},\ldots,\theta_{j-1}^{(N)}$ consist of
the states of multiple Markov chains with different initial seeds
obtained from previous conditional levels. This makes the
approximation (\ref{fjposterior}) more accurate in comparison with
the case of a single chain.
\begin{remark}
It is important to highlight that using the r.h.s of
(\ref{fjposterior}) as the posterior PDF $p(p_j|\mathcal{D}_{j-1})$
is equivalent to considering samples
$\theta_{j-1}^{(1)},\ldots,\theta_{j-1}^{(N)}$ as independent. This
probability model ignores information that the samples are generated
by an MCMC algorithm, just as it ignores the fact that the random
number generator used to generate these samples is, in fact, a
completely deterministic procedure. One corollary of this neglected
information is that generally
$\delta_{p(p_F)}<\delta_{\hat{p}_F^{SS}}$, where $\delta_{p(p_F)}$
is the c.o.v. of the posterior PDF of $p_F$ and
$\delta_{\hat{p}_F^{SS}}$  is the c.o.v. of the original SS
estimator $\hat{p}_F^{SS}$ (see also the numerical examples in
Section \ref{examples}). Notice, however, that the two c.o.v.s are
fundamentally different: $\delta_{p(p_F)}$ is defined based on
samples generated from a single run of SS, while the frequentist
c.o.v. $\delta_{\hat{p}_F^{SS}}$ is defined based on repeated runs
of SS (an infinite number of them!).
\end{remark}

The last step is to find the PDF of the product of stochastic
variables $p_F=\prod_{j=1}^mp_j$, given the distributions of all
factors $p_j$ by  (\ref{fjposterior}).

\begin{remark} Products of random (or stochastic) variables play a central role
in many different fields such as physics (interactive particle
systems), number theory (asymptotic properties of arithmetical
functions), statistics (asymptotic distributions of order
statistics), etc. The theory of products of random variables is well
covered in \cite{Products}.
\end{remark}

In general, to find the distribution of a product of stochastic
variables is a non-trivial task. A well-known result is Rohatgi’s
formula \cite{Rohatgi}: if $X_1$ and $X_2$ are continuous stochastic
variables with joint PDF $f_{X_1,X_2}$, then the PDF of $Y=X_1X_2$
is
\begin{equation}\label{simple}
f_Y(y)=\int_{-\infty}^{+\infty}f_{X_1,X_2}\left(x,\frac{y}{x}\right)\frac{1}{|x|}dx.
\end{equation}
This result is straightforward to derive but it is difficult to
implement, especially when the number of stochastic variables is
more than two. In the special case of a product of independent beta
variables, Tang and Gupta \cite{Tang} derived an exact
representation for the PDF and provided a recursive formula for
computing the coefficients of this representation.

\begin{theorem}[Tang and Gupta, \cite{Tang}]
Let $X_1,\ldots,X_m$ be independent beta variables, $X_j\sim
\mathcal{B}e(a_j,b_j)$, and $Y=X_1X_2\ldots X_m$, then the
probability density function of $Y$ can be written as follows:
\begin{equation}\label{fY}
f_Y(y)=\left(\prod_{j=1}^m\frac{\Gamma(a_j+b_j)}{\Gamma(a_j)}\right)y^{a_m-1}(1-y)^{\sum_{j=1}^mb_j-1}\cdot\sum_{r=0}^\infty
\sigma_r^{(m)}(1-y)^r, \hspace{3mm} 0<y<1,
\end{equation}
where $\Gamma$ is the gamma function and coefficients
$\sigma_r^{(m)}$ are defined by the following recurrence relation:
\begin{equation}\label{sigma}
\sigma_r^{(k)}=\frac{\Gamma(\sum_{j=1}^{k-1}b_j+r)}{\Gamma(\sum_{j=1}^{k}b_j+r)}\sum_{s=0}^r\frac{(a_k+b_k-a_{k-1})_s}{s!}\sigma_{r-s}^{(k-1)},
\hspace{3mm} r=0,1,\ldots, \hspace{2mm} k=2,\ldots,m,
\end{equation}
with initial values
\begin{equation}\label{initialvalues}
\sigma_0^{(1)}=\frac{1}{\Gamma(b_1)}, \hspace{5mm} \sigma_r^{(1)}=0
\mbox{ for } r\geq1.
\end{equation}
Here, for any real number $\alpha\in\mathbb{R}$,
$(\alpha)_s=\alpha(\alpha+1)\ldots(\alpha+s-1)=\frac{\Gamma(\alpha+s)}{\Gamma(\alpha)}$.
\end{theorem}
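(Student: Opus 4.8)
The plan is to prove the formula by induction on the number of factors $m$, using Rohatgi's formula (\ref{simple}) as the engine of the inductive step. The base case $m=1$ is immediate: $Y=X_1\sim\mathcal{B}e(a_1,b_1)$ has density $\frac{\Gamma(a_1+b_1)}{\Gamma(a_1)\Gamma(b_1)}y^{a_1-1}(1-y)^{b_1-1}$, which is exactly (\ref{fY}) with $m=1$ once the initial values (\ref{initialvalues}) are inserted, since the series then collapses to the single term $\sigma_0^{(1)}=1/\Gamma(b_1)$.

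For the inductive step I would write $Y=WX_m$, where $W=X_1\cdots X_{m-1}$ is independent of $X_m$ and, by the induction hypothesis, has a density of the form (\ref{fY}) with $m-1$ in place of $m$; write $C_{m-1}=\prod_{j<m}\Gamma(a_j+b_j)/\Gamma(a_j)$ for the constant it supplies. Since both $W$ and $X_m$ take values in $(0,1)$, Rohatgi's formula reduces to $f_Y(y)=\int_y^1 f_W(y/x)\,f_{X_m}(x)\,dx/x$ for $0<y<1$. Substituting the two densities and collecting powers of $x$ turns this into $C_{m-1}\frac{\Gamma(a_m+b_m)}{\Gamma(a_m)\Gamma(b_m)}\,y^{a_{m-1}-1}\sum_r\sigma_r^{(m-1)}\int_y^1(x-y)^{\sigma+r-1}(1-x)^{b_m-1}x^{\,a_m-a_{m-1}-\sigma-r}\,dx$, where $\sigma=\sum_{j=1}^{m-1}b_j$. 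The first key move is the reflection $x\mapsto v=y/x$, which rewrites each of these integrals as $y^{\,a_m-a_{m-1}}\int_y^1(1-v)^{\sigma+r-1}(v-y)^{b_m-1}v^{\,a_{m-1}-a_m-b_m}\,dv$: this supplies the factor $y^{\,a_m-a_{m-1}}$ needed to upgrade the prefactor $y^{a_{m-1}-1}$ from the hypothesis into the required $y^{a_m-1}$, and without it both the leading power of $y$ and the form of the recurrence would come out wrong. The second move is the affine change $v=y+(1-y)s$, which scales the integral to $[0,1]$, extracts a factor $(1-y)^{\sigma+r+b_m-1}$, and leaves $\bigl(1-(1-y)(1-s)\bigr)^{\,a_{m-1}-a_m-b_m}$. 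Expanding this by the generalized binomial theorem produces exactly the Pochhammer symbols $(a_m+b_m-a_{m-1})_\ell/\ell!$ against $(1-y)^\ell(1-s)^\ell$, and integrating term by term against $s^{b_m-1}(1-s)^{\sigma+r-1}$ via the Beta integral yields the gamma ratio $\Gamma(\sigma+r+\ell)/\Gamma(\sigma+b_m+r+\ell)$. The $\Gamma(b_m)$ from the beta density cancels the one from the Beta integral, and $C_{m-1}\Gamma(a_m+b_m)/\Gamma(a_m)$ telescopes to $\prod_{j=1}^m\Gamma(a_j+b_j)/\Gamma(a_j)$.

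What then remains is index bookkeeping: the total exponent of $(1-y)$ is $\sum_{j\le m}b_j-1+r+\ell$, so with $r'=r+\ell$ the double sum reorganizes into $\sum_{r'}(1-y)^{r'}$ times exactly the right-hand side of (\ref{sigma}), the gamma ratio collapsing to $\Gamma(\sum_{j<m}b_j+r')/\Gamma(\sum_{j\le m}b_j+r')$. I expect this re-indexing, together with the verification that the Pochhammer argument is genuinely $a_m+b_m-a_{m-1}$ (and in particular not $r$-dependent, which is precisely why the reflection is necessary), to be the only real obstacle — everything else is routine. One must also justify interchanging the infinite series with the two integrals, which follows from absolute convergence: the series in (\ref{fY}) is of generalized hypergeometric type with radius of convergence $1$ in $(1-y)$, hence converges for $0<y<1$, while the remaining integrands are bounded on compact subsets of $(y,1)$ and of $(0,1)$. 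An alternative, heavier route bypasses Rohatgi's formula altogether: compute the Mellin transform $\mathbb{E}[Y^{s-1}]=\prod_{j=1}^m\frac{\Gamma(a_j+b_j)\Gamma(a_j+s-1)}{\Gamma(a_j)\Gamma(a_j+b_j+s-1)}$, invert it, and recover (\ref{fY}) by summing residues; this gives the same series but needs more complex-analytic machinery.
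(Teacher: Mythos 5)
The paper does not prove this theorem at all: it is quoted verbatim from Tang and Gupta \cite{Tang} with only the citation as justification, so there is no in-paper argument to compare against. Your induction via Rohatgi's formula is a correct, self-contained derivation, and the computations check out. In the inductive step the power of $x$ collects to $a_m-a_{m-1}-\sigma-r$ as you state; the reflection $v=y/x$ produces exponent $(\sigma+r-1)+(a_m-a_{m-1}-\sigma-r)+1=a_m-a_{m-1}$ on $y$ and $a_{m-1}-a_m-b_m$ on $v$, so the prefactor upgrades to $y^{a_m-1}$; the affine map $v=y+(1-y)s$ gives $1-v=(1-y)(1-s)$, $v-y=(1-y)s$, $v=1-(1-y)(1-s)$, and the binomial expansion of $\bigl(1-(1-y)(1-s)\bigr)^{-(a_m+b_m-a_{m-1})}$ yields exactly the Pochhammer symbols in (\ref{sigma}); the Beta integral $\int_0^1 s^{b_m-1}(1-s)^{\sigma+r+\ell-1}ds$ supplies $\Gamma(b_m)\Gamma(\sigma+r+\ell)/\Gamma(\sigma+b_m+r+\ell)$, whose gamma ratio depends only on $r'=r+\ell$, which is precisely why the re-indexed double sum reproduces the recurrence with $k=m$. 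Two minor points deserve more care than your sketch gives them. First, the interchange of the infinite series with the integrals cannot be settled by Tonelli alone, because the Pochhammer coefficients $(a_k+b_k-a_{k-1})_s$ may be negative for small $s$ when $a_k+b_k<a_{k-1}$; however, only finitely many terms change sign, and since $(1-y)(1-s)\le 1-y<1$ for $y>0$ the binomial series converges absolutely and uniformly on the relevant compacts, so dominated convergence suffices. Second, the identity $(1-u)^{-\alpha}=\sum_{\ell\ge0}\frac{(\alpha)_\ell}{\ell!}u^\ell$ should be stated explicitly as the engine producing the Pochhammer symbols, since the whole recurrence hinges on the exponent being the $r$-independent quantity $a_m+b_m-a_{m-1}$, exactly as you observe. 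Your alternative Mellin-transform route is also viable and is closer in spirit to how such product-distribution results are often derived in the literature, but the inductive argument is the more elementary of the two.
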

We can obtain the posterior PDF of stochastic variable $p_F$ by
applying this theorem to $p_F=\prod_{j=1}^m p_j$, where $p_j\sim
\mathcal{B}e(n_j+1,N-n_j+1)$.

Let $p^{SS+}(p_F|\cup_{j=0}^{m-1}\mathcal{D}_j)$ denote the
right-hand side of (\ref{fY}) with $a_j=n_j+1$ and $b_j=N-n_j+1$,
and $\hat{p}_{MAP}^{SS+}$ be the maximum a posteriori (MAP)
estimate, i.e. the mode of $p^{SS+}$:
\begin{equation}\label{MAP}
\hat{p}_{MAP}^{SS+}=\arg\max\limits_{p_F\in[0,1]}p^{SS+}(p_F|\cup_{j=0}^{m-1}\mathcal{D}_j).
\end{equation}
Since the mode of a product of independent stochastic variables is
equal to the product of the modes, and the mode of the beta variable
$X\sim \mathcal{B}e(a,b)$ is $(a-1)/(a+b-2)$, we have:
\begin{equation}\label{MAP=frequentistestimate}
\hat{p}_{MAP}^{SS+}=\mbox{mode}
\left(p^{SS+}(p_F|\cup_{j=0}^{m-1}\mathcal{D}_j)\right)=\prod_{j=1}^m\mbox{mode}\left(p(p_j|\mathcal{D}_{j-1})\right)=\prod_{j=1}^m\frac{n_j}{N}=\hat{p}_F^{SS}.
\end{equation}
Thus, the original estimate $\hat{p}_F^{SS}$ of failure probability
obtained in the original Subset Simulation algorithm is just the MAP
estimate $\hat{p}_{MAP}^{SS+}$ corresponding to the PDF $p^{SS+}$.
This is a consequence of the choice of a uniform prior.

Although (\ref{fY}) provides an exact expression for the PDF of $Y$,
it contains an infinite sum that must be replaced by a truncated
finite sum in actual computations. This means that in applications
one has to use an approximation of the posterior PDF $p^{SS+}$ based
on (\ref{fY}). An alternative approach is to approximate the
distribution of the product $Y=\prod_{j=1}^m X_j$ by a single beta
variable $\tilde{Y}\sim \mathcal{B}e(a,b)$, where the parameters $a$
and $b$ are chosen so that $\mathbb{E}[\tilde{Y}]=\mathbb{E}[Y]$ and
$\mathbb{E}[\tilde{Y}^k]$ is as close to $\mathbb{E}[Y^k]$ as
possible for $2\leq k\leq K$, for some fixed $K$. This idea was
first proposed in \cite{Tukey}. In general, the product of beta
variables does not follow the beta distribution, nevertheless,  it
was shown in \cite{Fan} that the product can be accurately
approximated by a beta variable even in the case of $K=2$.

\begin{theorem}[Fan, \cite{Fan}] Let $X_1,\ldots,X_m$ be independent beta variables, $X_j\sim
\mathcal{B}e(a_j,b_j)$, and $Y=X_1X_2\ldots X_m$, then $Y$ is
approximately distributed as $\tilde{Y}\sim\mathcal{B}e(a,b)$, where
\begin{equation}\label{ab}
a=\mu_1\frac{\mu_1-\mu_2}{\mu_2-\mu_1^2}, \hspace{5mm}
b=(1-\mu_1)\frac{\mu_1-\mu_2}{\mu_2-\mu_1^2},
\end{equation}
and
\begin{equation}\label{mu}
\mu_1=\mathbb{E}[Y]=\prod_{j=1}^m\frac{a_j}{a_j+b_j}, \hspace{5mm}
\mu_2=\mathbb{E}[Y^2]=\prod_{j=1}^m\frac{a_j(a_j+1)}{(a_j+b_j)(a_j+b_j+1)}.
\end{equation}
\end{theorem}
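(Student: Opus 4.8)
The plan is to obtain the two displayed formulas by \emph{moment matching} and then to address the sense in which the approximation holds. First I would recall that for a single beta variable $X_j\sim\mathcal{B}e(a_j,b_j)$ the $k$-th moment is $\mathbb{E}[X_j^k]=(a_j)_k/(a_j+b_j)_k$, which follows at once from the normalization of the beta density (it is the ratio $\mathrm{B}(a_j+k,b_j)/\mathrm{B}(a_j,b_j)$). Since $X_1,\ldots,X_m$ are independent, $\mathbb{E}[Y^k]=\prod_{j=1}^m\mathbb{E}[X_j^k]$; specializing to $k=1$ and $k=2$ reproduces exactly the expressions (\ref{mu}) for $\mu_1=\mathbb{E}[Y]$ and $\mu_2=\mathbb{E}[Y^2]$.

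Next I would impose on the candidate approximation $\tilde{Y}\sim\mathcal{B}e(a,b)$ the two constraints $\mathbb{E}[\tilde{Y}]=\mu_1$ and $\mathbb{E}[\tilde{Y}^2]=\mu_2$. Writing $s=a+b$, the beta moments give $\mu_1=a/s$ and $\mu_2=a(a+1)/\big(s(s+1)\big)=\mu_1(\mu_1 s+1)/(s+1)$. Clearing denominators turns this into a linear equation in $s$, whose solution is $s=(\mu_1-\mu_2)/(\mu_2-\mu_1^2)$; back-substituting $a=\mu_1 s$ and $b=s-a=(1-\mu_1)s$ yields precisely (\ref{ab}). At this point I would also check admissibility of the resulting parameters: because $0<Y<1$ almost surely we have $Y^2<Y$, hence $\mu_2<\mu_1$, so $\mu_1-\mu_2>0$; and $\mu_2-\mu_1^2=\mathrm{Var}(Y)>0$ for a non-degenerate product; therefore $s>0$, and then $a=\mu_1 s>0$, $b=(1-\mu_1)s>0$, so $\mathcal{B}e(a,b)$ is a genuine probability distribution on $[0,1]$.

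It remains to justify the word ``approximately,'' and this is the step I expect to be the real obstacle to a fully rigorous treatment. Matching two moments does not, in general, pin down a distribution, so there is no formal total-variation bound to be extracted here; the content of the claim, following \cite{Fan}, is that within the two-parameter beta family — which already has the correct support $[0,1]$ and is flexible enough in shape — the member matching $\mathbb{E}[Y]$ and $\mathbb{E}[Y^2]$ fits the true distribution of the product remarkably well, and that the fit can be systematically improved by additionally matching higher moments $\mathbb{E}[Y^k]$ for $k\le K$ in the spirit of the Tukey-type construction \cite{Tukey}. Thus I would present the derivation of (\ref{ab})--(\ref{mu}) as the elementary and exact part of the argument, and defer the quantitative quality of the approximation either to the higher-moment corrections, to a direct comparison of cumulative distribution functions, or to the numerical evidence reported in \cite{Fan} and in Section~\ref{examples}.
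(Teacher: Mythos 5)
Your derivation is correct and matches the paper's treatment: the paper states this result as a citation of Fan and only remarks that it is ``easy to check'' that the first two moments of $Y$ and $\tilde{Y}$ coincide, deferring the accuracy of the approximation to \cite{Fan}, which is exactly the structure of your argument. You simply fill in the elementary moment-matching algebra (and the useful positivity check $\mu_2<\mu_1$ and $\mu_2-\mu_1^2>0$) that the paper leaves implicit, and you correctly identify that the word ``approximately'' carries no further rigorous content beyond the two-moment fit.
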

It is easy to check that if $\tilde{Y}\sim \mathcal{B}e(a,b)$ with
$a$ and $b$ given by (\ref{ab}), then the first two moments of
stochastic variables $Y$ and $\tilde{Y}$ coincide, i.e.
$\mathbb{E}[\tilde{Y}]=\mathbb{E}[Y]$ and
$\mathbb{E}[\tilde{Y}^2]=\mathbb{E}[Y^2]$. The accuracy of the
approximation $Y\dot{\sim}\mathcal{B}e(a,b)$ is discussed in
\cite{Fan}.

Using Theorem 2, we can therefore approximate the posterior
distribution $p^{SS+}$ of stochastic variable $p_F$ by the beta
distribution as follows:
\begin{equation}\label{PDFapprox}
p^{SS+}(p_F|\cup_{j=0}^{m-1}\mathcal{D}_j)\approx
\tilde{p}^{SS+}(p_F|\cup_{j=0}^{m-1}\mathcal{D}_j)=\mathcal{B}e(p_F|a,b),
\hspace{3mm} \mbox{i.e. }p_F\dot{\sim}\mathcal{B}e(a,b),
\end{equation}
where
\begin{equation}
a=\frac{\prod_{j=1}^m\frac{n_j+1}{N+2}\left(1-\prod_{j=1}^m\frac{n_j+2}{N+3}\right)}{\prod_{j=1}^m\frac{n_j+2}{N+3}-\prod_{j=1}^m\frac{n_j+1}{N+2}},
\hspace{5mm}
b=\frac{\left(1-\prod_{j=1}^m\frac{n_j+1}{N+2}\right)\left(1-\prod_{j=1}^m\frac{n_j+2}{N+3}\right)}{\prod_{j=1}^m\frac{n_j+2}{N+3}-\prod_{j=1}^m\frac{n_j+1}{N+2}}.
\end{equation}
Since the first two moments of $p^{SS+}$ and $\tilde{p}^{SS+}$ are
equal (this also means the c.o.v. of $p^{SS+}$ and $\tilde{p}^{SS+}$
are equal), we have:
\begin{equation}\label{E1}
\begin{split}
  \mathbb{E}_{\tilde{p}^{SS+}}[p_F]=& \mathbb{E}_{p^{SS+}}[p_F]=\prod_{j=1}^m\mathbb{E}_{p(p_j|\mathcal{D}_{j-1})}[p_j]=\prod_{j=1}^m
\frac{n_j+1}{N+2}, \\
\mathbb{E}_{\tilde{p}^{SS+}}[p_F^2]=&
\mathbb{E}_{p^{SS+}}[p_F^2]=\prod_{j=1}^m\mathbb{E}_{p(p_j|\mathcal{D}_{j-1})}[p_j^2]=\prod_{j=1}^m
\frac{(n_j+1)(n_j+2)}{(N+2)(N+3)}.
\end{split}
\end{equation}
Notice that
\begin{equation}\label{E2}
\lim_{N\rightarrow\infty}\mathbb{E}_{\tilde{p}^{SS+}}[p_F]=\lim_{N\rightarrow\infty}\hat{p}_F^{SS},
\hspace{5mm}\mbox{and }\hspace{2mm}
\mathbb{E}_{\tilde{p}^{SS+}}[p_F]\approx\hat{p}_F^{SS}, \mbox{ when
} N \mbox{ is large}
\end{equation}
so, the mean value of the approximation $\tilde{p}^{SS+}$ to the
posterior PDF $p^{SS+}$ is accurately approximated by the original
estimate $\hat{p}_F^{SS}$ of the failure probability $p_F$.

Let us now summarize the Bayesian post-processor of Subset
Simulation. From the algorithmic point of view, SS+ differs from SS
only in the produced output. Instead of a single real number as an
estimate of $p_F$, SS+ produces the posterior PDF
$p^{SS+}(p_F|\cup_{j=0}^{m-1}\mathcal{D}_j)$ of the failure
probability, which takes into account both prior information and the
sampled data $\cup_{j=0}^{m-1}\mathcal{D}_j$ generated by SS, while
its approximation
$\tilde{p}^{SS+}(p_F|\cup_{j=0}^{m-1}\mathcal{D}_j)$ is more
convenient for further computations. The posterior PDF $p^{SS+}$ and
its approximation $\tilde{p}^{SS+}$ are given by (\ref{fY}) and
(\ref{PDFapprox}), respectively, where
\begin{equation}\label{nj}
n_j=\left\{
      \begin{array}{ll}
        p_0N, & \hbox{if }j<m; \\
        N_F, & \hbox{if }j=m,
      \end{array}
    \right.
\end{equation}
and $m$ is the total number of intermediate levels in the run of the
algorithm. The relationship between SS and SS+ is given by
(\ref{MAP=frequentistestimate}) and (\ref{E2}). Namely, the original
estimate $\hat{p}_F^{SS}$ of the failure probability based on the
samples produced by the Subset Simulation algorithm coincides with
the MAP estimate corresponding to $p^{SS+}$, and it also accurately
approximates the mean of $\tilde{p}^{SS+}$. Also, the c.o.v. of
$p^{SS+}$ and $\tilde{p}^{SS+}$ coincide and can be computed using
the first two moments in (\ref{E1}).

\begin{remark} Note that to incorporate the
uncertainty in the value of $p_F$, one can use the full PDF
$\tilde{p}^{SS+}$ for life-cycle cost analyses, decision making
under risk, and so on, rather than just using a point estimate of
$p_F$. For instance, a performance loss function $\mathcal{L}$ often
depends on the failure probability. In this case one can calculate
an expected loss given by the following integral:
\begin{equation}\label{loss}
\mathbb{E}[\mathcal{L}(p_F)]=\int_0^1
  \mathcal{L}(p_F)\tilde{p}^{SS+}(p_F)dp_F,
\end{equation}
which takes into account the uncertainty in the value of the failure
probability.
\end{remark}

\begin{remark}
We note that a Bayesian post-processor for Monte Carlo evaluation of
the integral (\ref{pF}), denoted as MC+, can be obtained as a
special case of SS+. The posterior PDF for the failure probability
$p_F$ based on $N$ i.i.d. samples
$\mathcal{D}=\{\theta^{(1)},\ldots,\theta^{(N)}\}$ from $\pi(\cdot)$
is given by
\begin{equation}\label{MC+}
    p^{MC+}(p_F|\mathcal{D})=\mathcal{B}e(p_F|n+1,N-n+1),
\end{equation}
where $n=\sum_{k=1}^NI_F({\theta^{(k)}})$ is the number of failure
samples.
\end{remark}

\section{Illustrative Examples}\label{examples}
To demonstrate the Bayesian post-processor of Subset Simulation, we
consider its application to two different reliability problems: a
linear reliability problem and reliability analysis of an
elasto-plastic structure subjected to strong seismic ground motion.

\subsection{Linear Reliability Problem}

As a first example, consider a linear failure domain. Let
 $d=10^3$ be the dimension of the linear
problem and suppose $p_F=10^{-3}$ is the exact failure probability.
The failure domain $F$ is defined as
\begin{equation}\label{F-domain-linear}
    F=\{\theta\in\mathbb{R}^d : \langle\theta,e\rangle\geq
\beta\},
\end{equation}
where $e$ is a unit vector and $\beta=\Phi^{-1}(1-p_F)\approx3.09$
is the reliability index. This example is one where FORM
\cite{Melchers,Der} gives the exact failure probability in terms of
$\beta$. Note that $\theta^*=\beta e$ is the design point of the
failure domain $F$ \cite{Geominsight,Melchers}. The failure
probability estimate $\hat{p}_F^{SS}$ obtained by SS and the
approximation of the posterior PDF $\tilde{p}^{SS+}$ obtained by SS+
are given in Fig.~\ref{linear_example} based on a number of samples
$N=10^3$ at each level ($m=3$ levels were needed). Observe that
$\tilde{p}^{SS+}$ is quite narrowly focused (with the mean
$\mu_{\tilde{p}^{SS+}}=1.064\times10^{-3}$ and the c.o.v.
$\delta_{\tilde{p}^{SS+}}=0.16$) around
$\hat{p}_F^{SS}=1.057\times10^{-3}$, which is very close to the
exact value. Note that the frequentist c.o.v. of the original SS
estimator $\hat{p}_F^{SS}$ is $\delta_{\hat{p}_F^{SS}}=0.28$  (based
on 50 independent runs of the algorithm).

\subsection{Elasto-Plastic Structure Subjected to Ground Motion}
This example of a non-linear system is taken from \cite{Au}.
Consider a structure that is modeled as a 2D six-story
moment-resisting steel frame with two-node beam elements connecting
the joints of the frame. The floors are assumed to be rigid in-plane
and the joints are assumed to be rigid-plastic. The yield strength
is assumed to be $317$ MPa for all members. Under service load
condition, the floors and the roof are subjected to a
uniformly-distributed static span load of $24.7$ kN/m and $13.2$
kN/m, respectively. For the horizonal motion of the structure,
masses are lumped at the floor levels, which include contributions
from live loads and the dead loads from the floors and the frame
members. The natural frequencies of the first two modes of vibration
are computed to be $0.61$ Hz and $1.71$ Hz. Rayleigh damping is
assumed so that the first two modes have $2\%$ of critical damping.
For a full description of the structure, see \cite{Au}.

The structure is subject to uncertain earthquake excitations modeled
as a nonstationary stochastic process. To simulate a time history of
the ground motion acceleration for given moment magnitude $M$ and
epicentral distance $r$, a discrete-time white noise sequence
$W_j=\sqrt{2\pi/\Delta t}Z_j$, $j=1,\ldots,N_t$ is first generated,
where $\Delta t=0.03$ s is the sampling time, $N_t=1001$ is the
number of time instants (which corresponds to a duration of $30$ s),
and $Z_1,\ldots,Z_{N_t}$ are i.i.d. standard Gaussian variables. The
white noise sequence is then modulated (multiplied) by an envelope
function $e(t; M, r)$ at the discrete time instants. The discrete
Fourier transform is then applied to the modulated white-noise
sequence. The resulting spectrum is multiplied with a radiation
spectrum $A(f; M, r)$ \cite{Au}, after which the discrete inverse
Fourier transform is applied to transform the sequence back to time
domain to yield a sample for the ground acceleration time history.
The synthetic ground motion $a(t; Z, M, r)$ generated from the model
is thus a function of the Gaussian vector $Z=(Z_1,\ldots,Z_{N_t})^T$
and stochastic excitation model parameters $M$ and $r$. Here, $M=7$
and $r=50$ km are used. For more details about the ground motion
sampling, refer to \cite{Au}.

In this example, the uncertainty  arises from seismic excitations
and the uncertain parameters $\theta=Z$, the i.i.d. Gaussian
sequence $Z_1,\ldots,Z_{N_t}$ that generates the synthetic ground
motion. The system response of interest, $g(\theta)$, is defined to
be the peak (absolute) interstory drift ratio
$\delta_{\max}=\max_{i=1,\ldots,6} \delta_i$, where $\delta_i$ is
the maximum absolute interstory drift ratio of the $i^{\mbox{\tiny
th}}$ story within the duration of study, $30$ s. The failure domain
$F\subset \mathbb{R}^{N_t}$ is defined as the exceedance of peak
interstory drift ratio in any one of the stories within the duration
of study. That is
\begin{equation}\label{Au_failure}
    F=\{\theta\in\mathbb{R}^{N_t} : \delta_{\max}(\theta)>b\},
\end{equation}
where $b$ is some prescribed critical threshold. In this example,
$b=0.5\%$ is considered, which, according to \cite{vision},
corresponds to ``operational'' damage level. For this damage level,
the structure may have a small amount of yielding.

The failure probability is estimated to be equal to $p_F=8.9\times
10^{-3}$ (based on $4\times 10^4$ Monte Carlo samples). In the
application of Subset Simulation, three different implementation
scenarios are considered: $N=500$, $N=1000$, and $N=2000$ samples
are simulated at each conditional level. The failure probability
estimates $\hat{p}_F^{SS}$ obtained by SS for these scenarios and
the approximation of the corresponding posterior PDFs
$\tilde{p}^{SS+}$ obtained by SS+ are given in
Fig.~\ref{nonlinear_example}. Observe that the more samples used
(i.e. the more information about the system that is extracted), the
more narrowly $\tilde{p}^{SS+}$ is focused around $\hat{p}_F^{SS}$,
as expected. The coefficients of variation are
$\delta_{\tilde{p}^{SS+}}=0.190$, $\delta_{\tilde{p}^{SS+}}=0.134$,
and $\delta_{\tilde{p}^{SS+}}=0.095$ for $N=500$, $N=1000$, and
$N=2000$, respectively. The corresponding frequentist coefficients
of variation of the original SS estimator $\hat{p}_F^{SS}$ are
$\delta_{\hat{p}_F^{SS}}=0.303$, $\delta_{\hat{p}_F^{SS}}=0.201$,
and $\delta_{\hat{p}_F^{SS}}=0.131$ (based on 50 independent runs of
the algorithm). In 
SS+, the coefficient of variation
$\delta_{\tilde{p}^{SS+}}$ can be considered as a measure of
uncertainty, based on the generated samples.

\section{Conclusions}\label{Conclusions}

This paper focuses on enhancements to the Subset Simulation (SS)
method, an efficient algorithm for computing failure probabilities
for general high-dimensional reliability problems  proposed by Au
and Beck \cite{AuBeck}.

First, we explore MMA (Modified Metropolis algorithm), an MCMC
technique employed within SS. This exploration leads to the
following nearly optimal scaling strategy for MMA: at simulation
level $j\geq1$, select $\sigma_j$ such that the corresponding
acceptance rate $\rho_j$ is between $30\%$ and $50\%$.

Next, we provide a theoretical basis for the optimal value of the
conditional failure probability $p_0$. We demonstrate that choosing
any $p_0\in[0.1, 0.3]$ will  lead to similar efficiency and it is
not necessary to fine tune the value of the conditional failure
probability as long as SS is implemented properly.

Finally, a Bayesian extension $SS+$ of the original SS method is
developed. In SS+, the uncertain failure probability $p_F$ that one
is estimating is modeled as a stochastic variable whose possible
values belong to the unit interval. Instead of a single real number
as an estimate as in SS, SS+ produces the posterior PDF
$p^{SS+}(p_F)$ of the failure probability, which takes into account
both prior information and the information in the samples generated
by SS. This PDF quantifies the uncertainty in the value of $p_F$
based on the samples and prior information and it may be used in
risk analyses to incorporate this uncertainty, or its approximation
$\tilde{p}^{SS+}(p_F)$, which is more convenient for further
computations. The original SS estimate corresponds to the most
probable value in the Bayesian approach.

\section*{Acknowledgements}\label{Acknowledgements}

This work was supported by the National Science Foundation, under
award number EAR-0941374 to the California Institute of Technology.
This support is gratefully acknowledged. Any opinions, findings, and
conclusions or recommendations expressed in this material are those
of the authors and do not necessarily reflect those of the National
Science Foundation.

\section*{Appendix}\label{Appendix}

In this Appendix we give a detailed proof that
$\pi(\cdot|\mathbb{F})$ is the stationary distribution for the
Markov chain generated by the Modified Metropolis algorithm
described in Section \ref{Subset Simulation}.

\begin{theorem}[Au and Beck, \cite{AuBeck}]
Let $\theta^{(1)},\theta^{(2)},\ldots$ be the Markov chain generated
by the Modified Metropolis algorithm, then $\pi(\cdot|\mathbb{F})$
is a stationary distribution, i.e. if $\theta^{(i)}$ is distributed
according to $\pi(\cdot|\mathbb{F})$, then so is $\theta^{(i+1)}$.
\end{theorem}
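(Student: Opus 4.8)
The plan is to establish the stronger property that the transition kernel $P$ of the Modified Metropolis algorithm is \emph{reversible} with respect to $\pi(\cdot|\mathbb{F})$, i.e. that the measure $\pi(d\theta|\mathbb{F})\,P(\theta,d\xi)$ on $\mathbb{R}^d\times\mathbb{R}^d$ is invariant under swapping the two arguments; integrating out $\theta$ then yields exactly the claimed invariance $\int\pi(d\theta|\mathbb{F})P(\theta,d\xi)=\pi(d\xi|\mathbb{F})$. First I would isolate the coordinate-wise candidate-generating mechanism, before the membership check (\ref{accept reject}). Since the $d$ coordinates of $\xi$ are updated independently, this mechanism defines a product kernel $\kappa(\theta,d\xi)=\prod_{k=1}^{d}\kappa_k(\theta_k,d\xi_k)$, where each $\kappa_k$ is precisely the ordinary one-dimensional Metropolis kernel with symmetric proposal $S_k$ and target $\pi_k$, of the mixed form $\kappa_k(a,db)=S_k(b|a)\min\{1,\pi_k(b)/\pi_k(a)\}\,db+\rho_k(a)\,\delta_a(db)$ with $\rho_k(a)$ the total rejection probability. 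The elementary identity $\pi_k(a)S_k(b|a)\min\{1,\pi_k(b)/\pi_k(a)\}=\pi_k(b)S_k(a|b)\min\{1,\pi_k(a)/\pi_k(b)\}$, which uses the symmetry $S_k(a|b)=S_k(b|a)$, shows $\kappa_k$ is reversible with respect to $\pi_k$; taking products over $k$ and using $\pi(\theta)=\prod_k\pi_k(\theta_k)$ gives that $\kappa$ is reversible with respect to $\pi$.

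Next I would reinstate the global accept/reject step. For $\theta\in\mathbb{F}$ the full kernel is
\[
P(\theta,d\xi)=I_{\mathbb{F}}(\xi)\,\kappa(\theta,d\xi)+\Big(\int_{\mathbb{R}^d}I_{\mathbb{F}^c}(\eta)\,\kappa(\theta,d\eta)\Big)\delta_\theta(d\xi).
\]
Multiplying by $\pi(d\theta|\mathbb{F})=\pi(\theta)I_{\mathbb{F}}(\theta)\,d\theta/P(\mathbb{F})$, the first term becomes $I_{\mathbb{F}}(\theta)I_{\mathbb{F}}(\xi)\,\pi(\theta)\kappa(\theta,d\xi)\,d\theta/P(\mathbb{F})$, which is symmetric in $(\theta,\xi)$ because $\pi(\theta)\kappa(\theta,d\xi)\,d\theta$ is (reversibility of $\kappa$) and the factor $I_{\mathbb{F}}(\theta)I_{\mathbb{F}}(\xi)$ is; the second term is a mass on the diagonal $\{\theta=\xi\}$ and is therefore automatically swap-invariant. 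Hence $P$ is reversible with respect to $\pi(\cdot|\mathbb{F})$, and integrating out $\theta$ (using $\int P(\xi,d\theta)=1$) gives the theorem. As a byproduct, the same reversible kernel started from a seed already distributed as $\pi(\cdot|\mathbb{F})$ keeps every subsequent state so distributed, which is the perfect-sampling property invoked for Subset Simulation.

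The step I expect to require the most care is the measure-theoretic bookkeeping, not the algebra: because each $\kappa_k$ carries an atom $\delta_a$ from coordinate rejection, $\kappa(\theta,\cdot)$ is a sum of $2^d$ pieces of mixed dimension, and the added global-rejection point mass at $\theta$ partly overlaps the ``all coordinates rejected'' piece, so one must be sure nothing is double counted and that the $\min\{1,r\}$ cancellation is applied only on the absolutely continuous factors. Writing $\kappa_k(a,db)=q_k(a,b)\,db+\rho_k(a)\delta_a(db)$ and expanding the product term by term resolves this cleanly: on the fully continuous term the factors satisfy $\pi_k(a)q_k(a,b)=\pi_k(b)q_k(b,a)$ for every $k$, and any term containing a $\delta_a$ in coordinate $k$ forces equality of that coordinate on both sides, so symmetry is preserved in each of the $2^d$ terms separately. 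Once this decomposition is written down the remaining verification is routine.
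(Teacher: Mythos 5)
Your proposal is correct and follows essentially the same route as the paper's Appendix: both establish detailed balance (reversibility) of the MMA kernel with respect to $\pi(\cdot|\mathbb{F})$ and reduce it, via the product structure over coordinates, to the one-dimensional identity $\pi_k(a)S_k(b|a)\min\{1,\pi_k(b)/\pi_k(a)\}=\pi_k(b)S_k(a|b)\min\{1,\pi_k(a)/\pi_k(b)\}$ coming from the symmetry of $S_k$. Your two-stage factorization (first proving the unconditional product kernel $\kappa$ is $\pi$-reversible, then observing that thinning by $I_{\mathbb{F}}$ plus the diagonal rejection mass preserves reversibility for the restricted measure) and your explicit expansion of the $2^d$ mixed atomic/continuous terms amount to a somewhat more careful bookkeeping of the same argument.
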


\begin{proof} Let $K$ denote the transition kernel of the Markov chain generated by
MMA. From the structure of the algorithm it follows that K has the
following form:
\begin{equation}\label{kernel}
    K(d\theta^{(i+1)}|\theta^{(i)})=k(d\theta^{(i+1)}|\theta^{(i)})
+ r(\theta^{(i)})\delta_{\theta^{(i)}}(d\theta^{(i+1)}),
\end{equation}
where $k$ describes the transitions from  $\theta^{(i)}$ to
$\theta^{(i+1)}\neq\theta^{(i)}$,
$r(\theta^{(i)})=1-\int_\mathbb{F}k(d\theta'|\theta^{(i)})$ is the
probability of remaining at $\theta^{(i)}$, and $\delta_\theta$
denotes point mass at $\theta$ (Dirac measure). Note that if
$\theta^{(i+1)}\neq\theta^{(i)}$, then
$k(d\theta^{(i+1)}|\theta^{(i)})$ can be expressed as a product of
the component transitional kernels:
\begin{equation}\label{k}
k(d\theta^{(i+1)}|\theta^{(i)})=\prod_{j=1}^dk_j(d\theta^{(i+1)}_j|\theta^{(i)}_j),
\end{equation}
where $k_j$ is the transitional kernel for the $j^{\mathrm{th}}$
component of $\theta^{(i)}$. By definition of the algorithm
\begin{equation}\label{kj}
k_j(d\theta^{(i+1)}_j|\theta^{(i)}_j)=S_j(\theta_j^{(i+1)}|\theta_j^{(i)})\min\left\{1,\frac{\pi_j(\theta_j^{(i+1)})}{\pi_j(\theta_j^{(i)})}\right\}d\theta^{(i+1)}_j+r_j(\theta_j^{(i)})\delta_{\theta_j^{(i)}}(d\theta_j^{(i+1)})
\end{equation}

A sufficient condition for $\pi(\cdot|\mathbb{F})$ to be a
stationary distribution for $K$ is to satisfy the so-called
reversibility condition (also sometimes called the detailed balance
equation):
\begin{equation}\label{DB}
    \pi(d\theta^{(i)}|\mathbb{F})K(d\theta^{(i+1)}|\theta^{(i)})=\pi(d\theta^{(i+1)}|\mathbb{F})K(d\theta^{(i)}|\theta^{(i+1)})
\end{equation}
Indeed, given (\ref{DB}) and
$\theta^{(i)}\sim\pi(\cdot|\mathbb{F})$, the distribution of
$\theta^{(i+1)}$ is
\begin{equation}\label{longformula}
\begin{split}
  &p(d\theta^{(i+1)})=\int_{\mathbb{F}}\pi(d\theta^{(i)}|\mathbb{F})K(d\theta^{(i+1)}|\theta^{(i)})\\
  &=\int_{\mathbb{F}}\pi(d\theta^{(i+1)}|\mathbb{F})K(d\theta^{(i)}|\theta^{(i+1)})\\
  &=\pi(d\theta^{(i+1)}|\mathbb{F})\int_{\mathbb{F}}K(d\theta^{(i)}|\theta^{(i+1)})\\
  &=\pi(d\theta^{(i+1)}|\mathbb{F}),
\end{split}
\end{equation}
since $\int_{\mathbb{F}}K(d\theta^{(i)}|\theta^{(i+1)})\equiv1$.

So, all we need to prove is that the transition kernel $K$ of the MM
algorithm satisfies the reversibility condition (\ref{DB}). Since
all the Markov chain samples lie in $\mathbb{F}$, it is sufficient
to consider the transition only between states in $\mathbb{F}$.
Hence, without loss of generality, we assume that both
$\theta^{(i)}$ and $\theta^{(i+1)}$ belong to $\mathbb{F}$. In
addition, we assume that $\theta^{(i)}\neq\theta^{(i+1)}$, since
otherwise (\ref{DB}) is trivial. It then follows from (\ref{kernel})
that in this case
$K(d\theta^{(i+1)}|\theta^{(i)})=k(d\theta^{(i+1)}|\theta^{(i)})$.
Taking into account (\ref{k}), we can rewrite the reversibility
condition (\ref{DB}) in terms of components:
\begin{equation}\label{reducedDP}
    \prod_{j=1}^d\pi_j(\theta_j^{(i)})k_j(d\theta^{(i+1)}_j|\theta^{(i)}_j)d\theta_j^{(i)}=
    \prod_{j=1}^d\pi_j(\theta_j^{(i+1)})k_j(d\theta^{(i)}_j|\theta^{(i+1)}_j)d\theta_j^{(i+1)}.
\end{equation}
Therefore, it is enough to show that for all $j=1,\ldots,d$
\begin{equation}\label{comp}
    \pi_j(\theta_j^{(i)})k_j(d\theta^{(i+1)}_j|\theta^{(i)}_j)d\theta_j^{(i)}=\pi_j(\theta_j^{(i+1)})k_j(d\theta^{(i)}_j|\theta^{(i+1)}_j)d\theta_j^{(i+1)}.
\end{equation}
If $\theta_j^{(i)}=\theta_j^{(i+1)}$, then (\ref{comp}) is trivial.
Assume that $\theta_j^{(i)}\neq\theta_j^{(i+1)}$. Then,  it follows
from (\ref{kj}) that
$k_j(d\theta^{(i+1)}_j|\theta^{(i)}_j)=S_j(\theta_j^{(i+1)}|\theta_j^{(i)})\min\left\{1,\frac{\pi_j(\theta_j^{(i+1)})}{\pi_j(\theta_j^{(i)})}\right\}d\theta^{(i+1)}_j$,
and (\ref{comp}) reduces to
\begin{equation}\label{final}
\pi_j(\theta_j^{(i)})S_j(\theta_j^{(i+1)}|\theta_j^{(i)})\min\left\{1,\frac{\pi_j(\theta_j^{(i+1)})}{\pi_j(\theta_j^{(i)})}\right\}=
\pi_j(\theta_j^{(i+1)})S_j(\theta_j^{(i)}|\theta_j^{(i+1)})\min\left\{1,\frac{\pi_j(\theta_j^{(i)})}{\pi_j(\theta_j^{(i+1)})}\right\}.
\end{equation}
Since $S_j$ is symmetric and $b\min\{1,a/b\}=a\min\{1,b/a\}$ for any
positive numbers $a$ and $b$, (\ref{final}) is satisfied. This
proves that $\pi(\cdot|\mathbb{F})$ is \textit{a} stationary
distribution of the MMA Markov chain.
\end{proof}

\begin{remark} A stationary distribution is unique and, therefore,
is the limiting distribution for a Markov chain, if the chain is
aperiodic and irreducible (see, for example, \cite{Tierney}). In the
case of MMA, aperiodicity is guaranteed by the fact that the
probability of having a repeated sample
$\theta^{(i+1)}=\theta^{(i)}$ is not zero. A Markov chain with
stationary distribution $p(\cdot)$ is irreducible if, for any
initial state, it has positive probability of entering any set to
which $p(\cdot)$ assigns positive probability. It is clear that MMA
with ``standard'' proposal distributions (e.g. Gaussian, uniform,
log-normal, etc) generates irreducible Markov chains. In this case,
$\pi(\cdot|\mathbb{F})$ is therefore the unique stationary
distribution of the MMA Markov chain.
\end{remark}

\newpage
\begin{table}[h]
\begin{center}
\begin{tabular}{lcccccc}
\hline
  Simulation Level $j$ & 1 & 2 & 3 & 4 & 5 & 6 \\
\hline\hline
  Example 1, $\sigma_j^{\mathrm{opt}}$& 0.9 & 0.7 & 0.4 & 0.3 & 0.3 & 0.3 \\
  Example 2, $\sigma_j^{\mathrm{opt}}$& 1.1 & 0.8 & 0.6 & 0.4 & 0.4 & 0.4 \\
  \hline
\end{tabular}
\end{center}
\label{Tab1}\caption{\footnotesize Approximate values of the optimal
spread for different simulation levels}
\end{table}

\begin{table}[h]
\begin{center}
\begin{tabular}{lcccccc}
\hline
  Simulation Level $j$ & 1 & 2 & 3 & 4 & 5 & 6 \\
\hline\hline
  Example 1, $\rho_j(1)$ &49\%  &30\%  &20\% &13\% &9\%  &6\%  \\
  Example 1, $\rho_j^{\mathrm{opt}}$&  51\%&39\%  &47\%  &50\%  &44\%  &40\%  \\
\hline
  Example 2, $\rho_j(1)$& 53\%&35\% &23\% &16\% &11\%  &8\%  \\
  Example 2, $\rho_j^{\mathrm{opt}}$& 52\% & 41\%& 40\% &49\%  &43\%  &37\%  \\
  \hline
\end{tabular}
\end{center}
\label{Tab22}\caption{\footnotesize Approximate values of the
acceptance rates $\rho_j(1)$ and $\rho_j^{\mathrm{opt}}$ that
correspond to the reference value $\sigma_j=1$ and the optimal
spread $\sigma_j=\sigma_j^{\mathrm{opt}}$, respectively}
\end{table}
\newpage

\begin{figure}\label{MMalg}\centering
\begin{picture}(150,150)
\put(0,10){\vector(1,0){150}} \put(10,0){\vector(0,1){150}}
\put(145,15){$\theta_k$} \put(15,145){$\theta_l$}
\qbezier(40,140)(30,30)(150,60) \put(50,120){$\mathbb{F}$}
\put(85,85){\circle*{5}}\put(68,72){$\theta^{(i)}$}
\multiput(83.5,13)(0,20){4}{$|$}
\multiput(10,82)(20,0){4}{\textbf{---}} \qbezier(40,10)(55,9)(65,19)
\qbezier(65,19)(85,39)(105,19) \qbezier(105,19)(115,9)(130,10)
\put(85,10){\circle*{3}}\put(68,-5){$\theta^{(i)}_k$}
\put(103,10){\circle*{3}}\put(98,-2){$\xi_k$}
 \qbezier(10,40)(9,55)(19,65)
\qbezier(19,65)(39,85)(19,105) \qbezier(19,105)(9,115)(10,130)
\put(10,85){\circle*{3}}\put(-5,85){$\theta^{(i)}_{l}$}
\put(10,105){\circle*{3}}\put(0,105){$\xi_{l}$}
\multiput(10,102)(20,0){5}{\textbf{---}}
\multiput(101.3,14)(0,20){5}{$|$} \put(103,105){\circle*{5}}
\put(107,107){$\theta^{(i+1)}$} 
\put(25,20){\footnotesize{$S_k(\cdot|\theta_k^{(i)})$}}
\put(15,50){\footnotesize{$S_l(\cdot|\theta_l^{(i)})$}}
\end{picture}
\caption{\footnotesize Modified Metropolis algorithm}\label{TTT}
\end{figure}
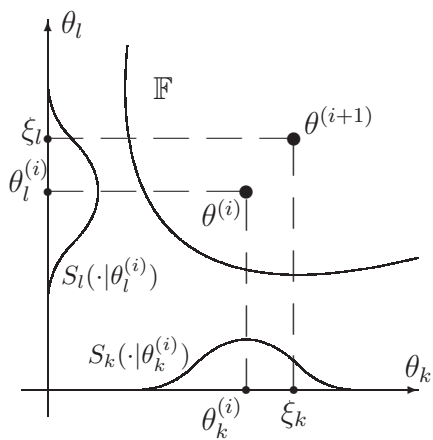
\newpage  \vspace{5cm}

\begin{figure}\label{SSalg}\centering
\begin{picture}(200,200)
\put(0,100){\vector(1,0){200}} \put(100,0){\vector(0,1){200}}

\qbezier(75,105)(65,125)(85,115)\qbezier(85,115)(115,105)(115,110)
\qbezier(115,110)(115,125)(120,105)\qbezier(120,105)(125,75)(95,80)
\qbezier(95,80)(80,85)(75,105)

\qbezier(50,110)(30,150)(70,130) \qbezier(70,130)(130,110)(130,150)
\qbezier(130,150)(135,170)(140,110)\qbezier(140,110)(140,50)(90,60)
\qbezier(90,60)(60,70)(50,110)

\qbezier(25,105)(-5,175)(55,145) \qbezier(55,145)(135,105)(125,175)
\qbezier(125,175)(125,200)(160,110)\qbezier(160,110)(180,45)(85,40)
\qbezier(85,40)(40,40)(25,105)

\put(118,115){\footnotesize{$F_{j-1}$}}\put(141,115){\footnotesize{$F_{j}$}}\put(160,115){\footnotesize{$F_{j+1}$}}
\put(75,75){\circle*{3}}\put(70,82){\circle*{3}}\put(125,75){\circle*{3}}\put(130,83){\circle*{3}}\put(127,90){\circle*{3}}\put(110,70){\circle*{3}}\put(120,68){\circle*{3}}
\put(90,120){\circle*{3}}\put(80,121){\circle*{3}}\put(70,125){\circle*{3}}
\put(110,115){\circle*{3}}\put(133,133){\circle*{3}}\put(60,120){\circle*{3}}\put(65,105){\circle*{3}}\put(85,70){\circle*{3}}

\put(75,132){\circle*{3}}\put(75,132){\circle{5}}\put(68,133){\circle{3}}\put(66,137){\circle{3}}\put(70,140){\circle{3}}\put(72,150){\circle{3}}\put(75,160){\circle{3}}\put(80,155){\circle{3}}\put(83,150){\circle{3}}\put(80,147){\circle{3}}\put(77,142){\circle{3}}\put(81,140){\circle{3}}
\put(40,110){\circle*{3}}\put(40,110){\circle{5}}\put(43,115){\circle{3}}\put(37,104){\circle{3}}\put(35,108){\circle{3}}\put(30,111){\circle{3}}\put(27,117){\circle{3}}\put(24,112){\circle{3}}\put(29,106){\circle{3}}\put(37,114){\circle{3}}\put(18,114){\circle{3}}\put(16,107){\circle{3}}
\put(65,65){\circle*{3}}\put(65,65){\circle{5}}\put(60,60){\circle{3}}\put(57,64){\circle{3}}\put(57,64){\circle{3}}\put(55,57){\circle{3}}\put(50,60){\circle{3}}\put(62,54){\circle{3}}\put(58,52){\circle{3}}\put(65,57){\circle{3}}\put(50,50){\circle{3}}\put(45,55){\circle{3}}\put(45,55){\circle{3}}\put(45,50){\circle{3}}
\put(140,75){\circle*{3}}\put(140,75){\circle{5}}\put(142,80){\circle{3}}\put(145,70){\circle{3}}\put(136,68){\circle{3}}\put(141,60){\circle{3}}\put(142,65){\circle{3}}\put(150,75){\circle{3}}\put(152,65){\circle{3}}\put(146,62){\circle{3}}\put(155,70){\circle{3}}\put(155,58){\circle{3}}
\put(139,145){\circle*{3}}\put(139,145){\circle{5}}\put(137,151){\circle{3}}\put(140,153){\circle{3}}\put(138,157){\circle{3}}\put(132,160){\circle{3}}\put(136,162){\circle{3}}\put(130,170){\circle{3}}\put(128,165){\circle{3}}\put(128,176){\circle{3}}\put(132,180){\circle{3}}
\end{picture}
\caption{\footnotesize Subset Simulation algorithm: disks $\bullet$
and circles $\circ$ represent samples from $\pi(\cdot|F_{j-1})$ and
$\pi(\cdot|F_{j})$, respectively; circled disks are the Markov chain
seeds for $\pi(\cdot|F_j)$}\label{SSalg}
\end{figure}
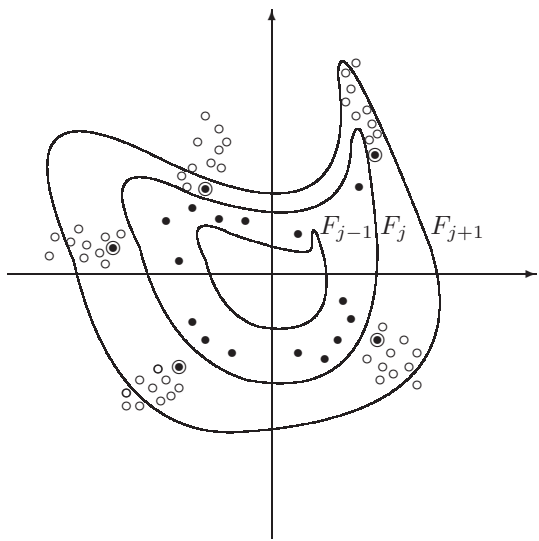
\newpage

\begin{figure}\centering
\includegraphics[angle=0,scale=1]{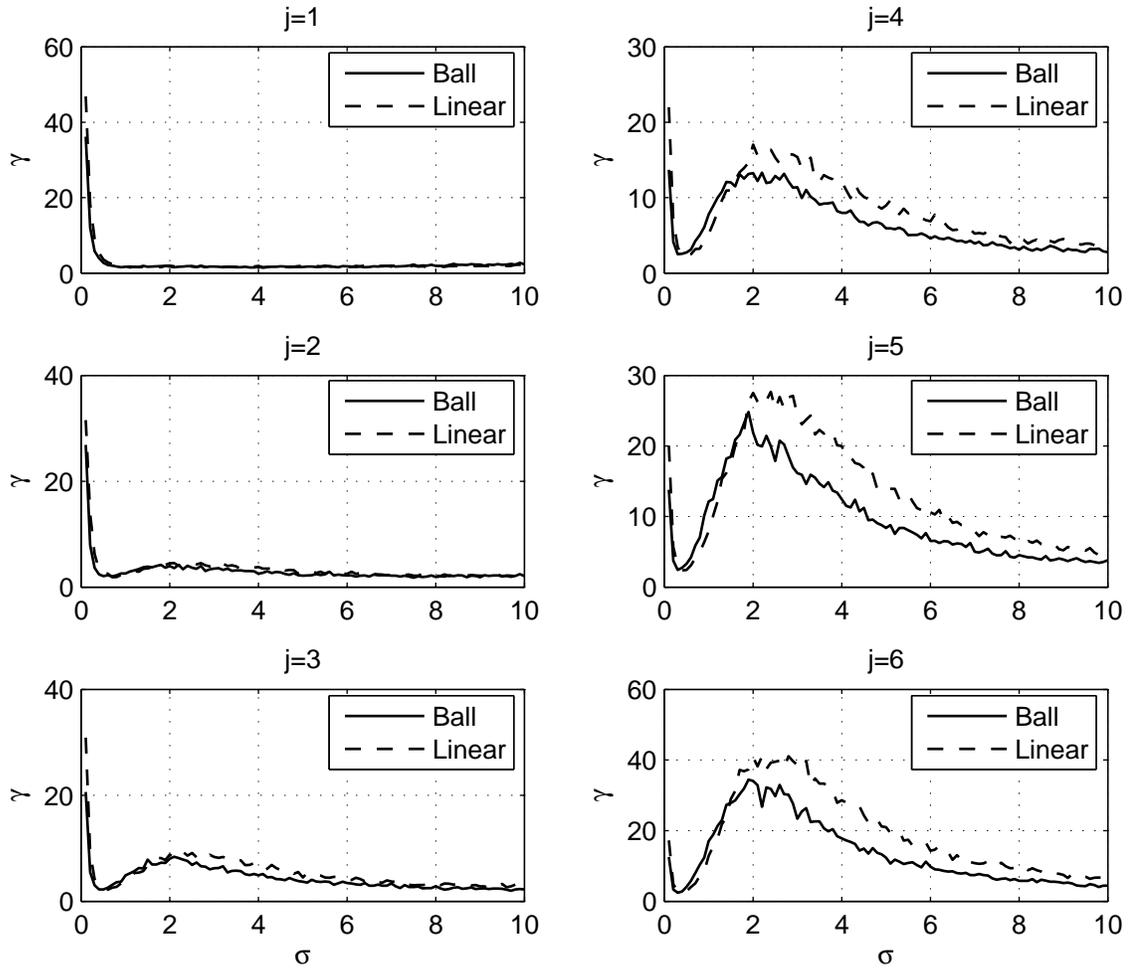}
\caption{\footnotesize The $\gamma$-efficiency of the Modified
Metropolis algorithm as a function of spread $\sigma$ for simulation
levels $j=1,\ldots,6$} \label{sigma_vs_gamma}
\end{figure}
\newpage

\begin{figure}\centering
\includegraphics[angle=0,scale=1]{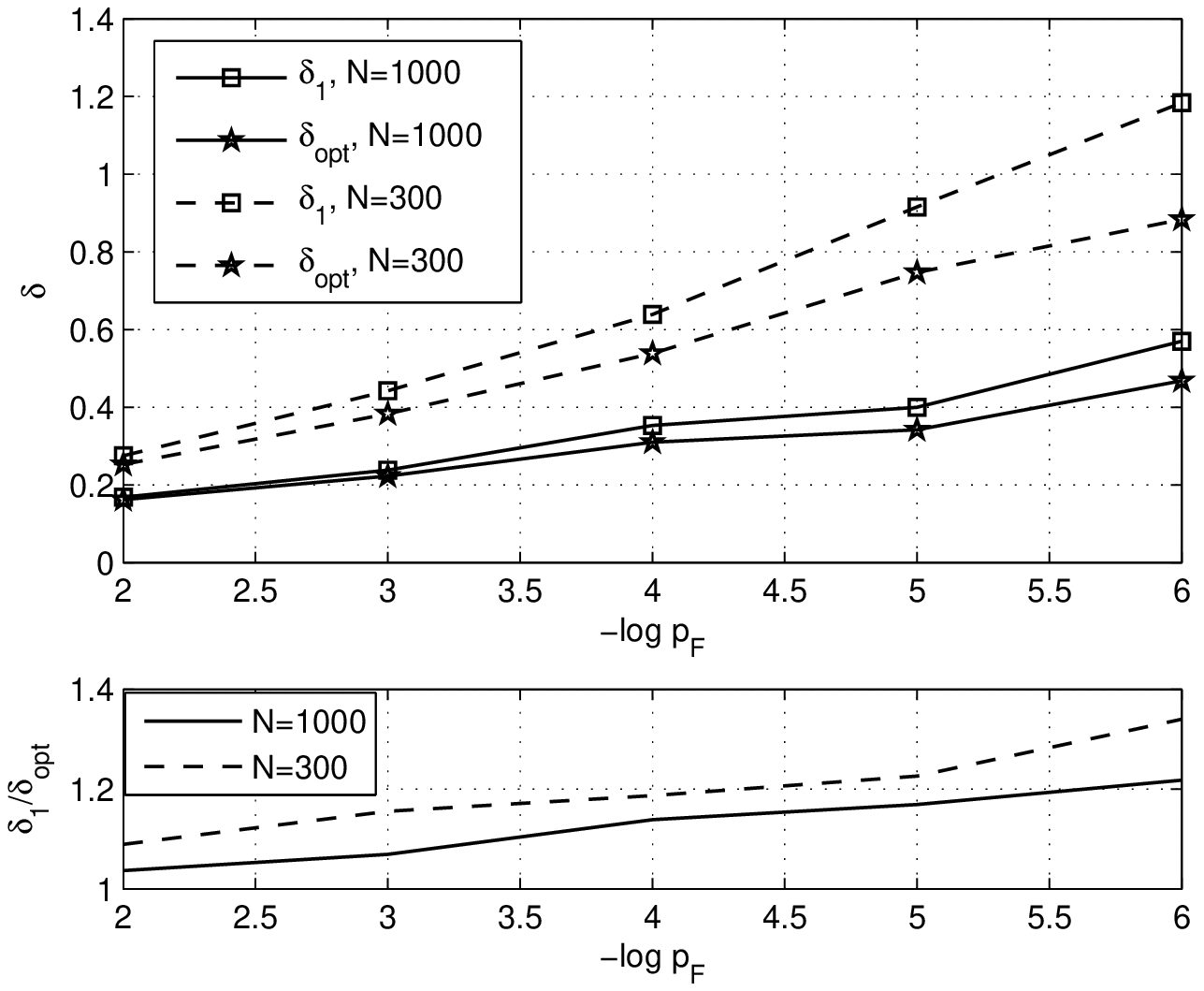}
\caption{\footnotesize The c.o.v. of $p_F$ estimates obtained by
Subset Simulation for Example $1$} \label{covBall}
\end{figure}
\newpage

\begin{figure}\centering
\includegraphics[angle=0,scale=1]{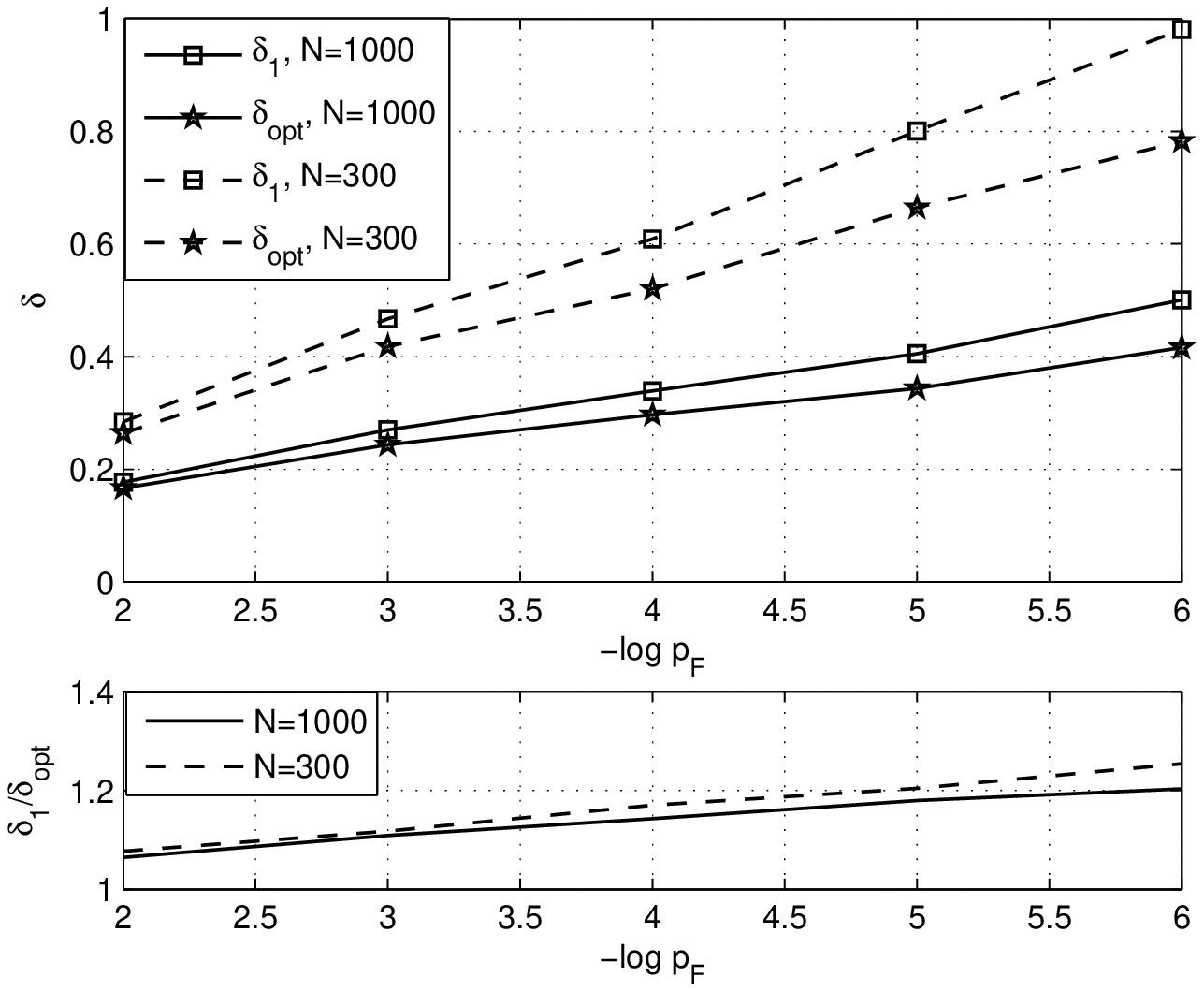}
\caption{\footnotesize The c.o.v. of $p_F$ estimates obtained by
Subset Simulation for Example $2$} \label{covLinear}
\end{figure}
\newpage

\begin{figure}\centering
\includegraphics[angle=0,scale=1]{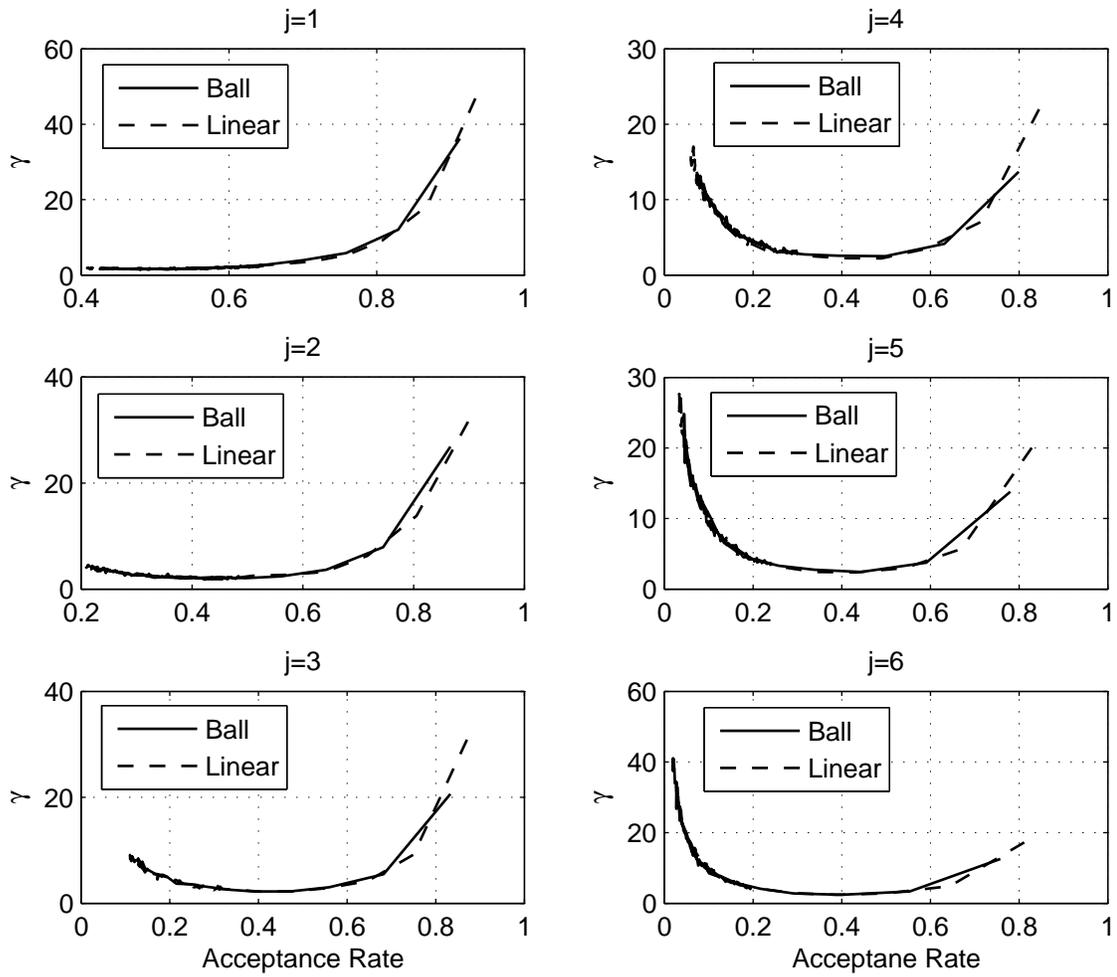}
\caption{\footnotesize The $\gamma$-efficiency of the Modified
Metropolis algorithm as a function of the acceptance rate for
simulation levels $j=1,\ldots,6$} \label{accrate_vs_gamma}
\end{figure}
\newpage

\begin{figure}\centering
\includegraphics[angle=0,scale=1]{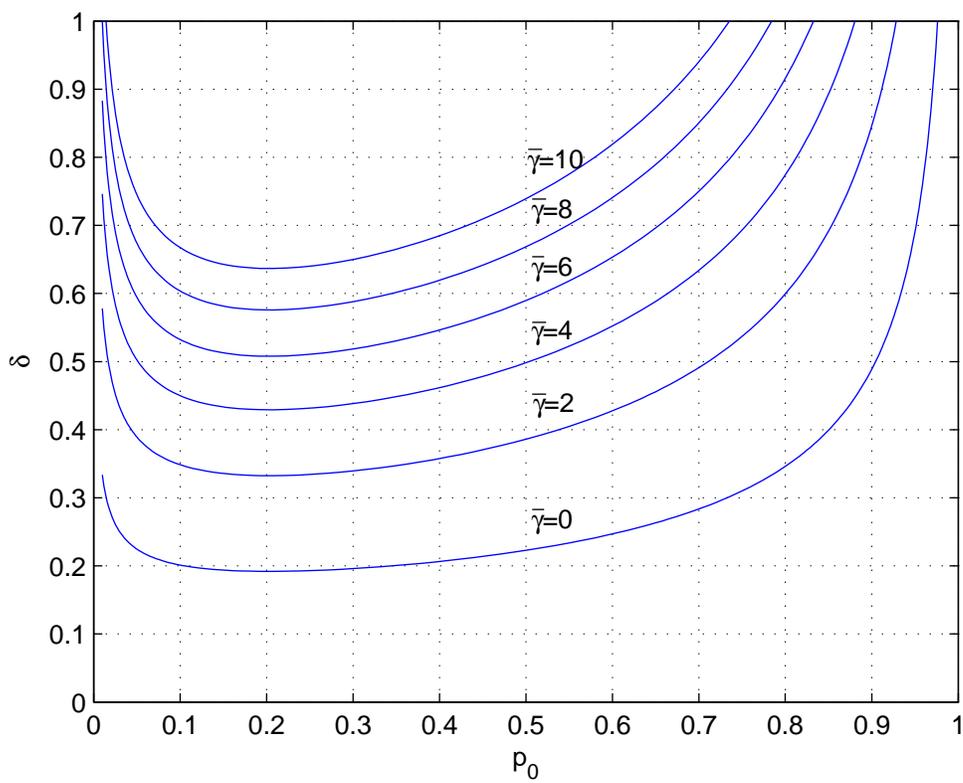}
\caption{\footnotesize Variation of $\delta$ as a function of $p_0$
according to (\ref{CV2}) for $p_F=10^{-3}$, $N_T=2000$, and
$\bar{\gamma}=0,2,4,6,8,10$} \label{deltavsp0}
\end{figure}
\newpage

\begin{figure}\centering
\includegraphics[angle=0,scale=1]{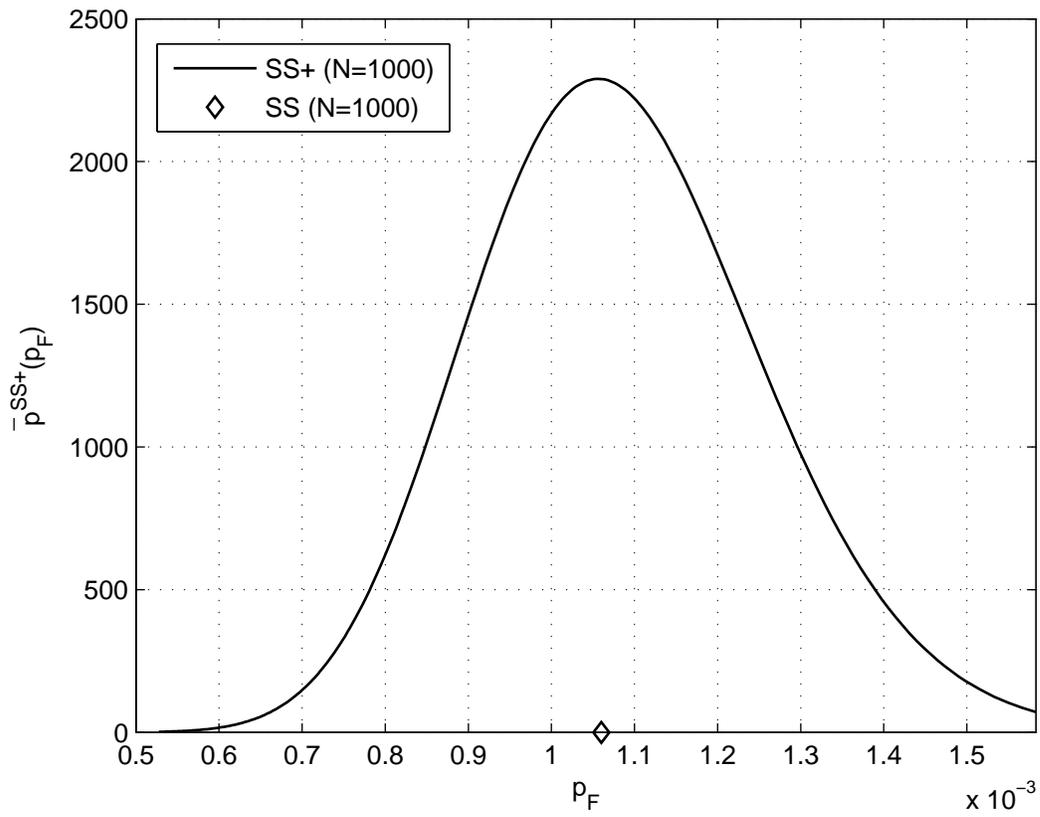}
\caption{\footnotesize The failure probability estimate
$\hat{p}_F^{SS}$ obtained by SS and the approximation of the
posterior PDF $\tilde{p}^{SS+}$ obtained by SS+. The posterior PDF
has mean $\mu_{\tilde{p}^{SS+}}=1.064\times10^{-3}$ and the c.o.v.
$\delta_{\tilde{p}^{SS+}}=0.16$} \label{linear_example}
\end{figure}

\begin{figure}\centering
\includegraphics[angle=0,scale=1]{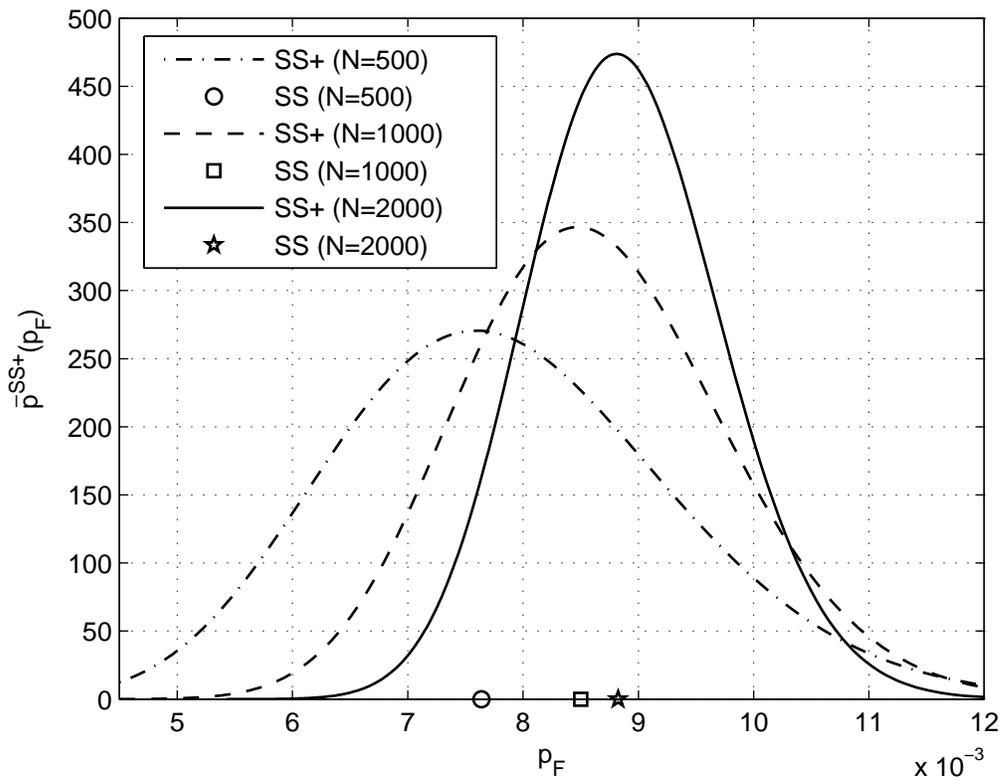}
\caption{\footnotesize The failure probability estimates
$\hat{p}_F^{SS}$ obtained by SS and the approximation of the
posterior PDF $\tilde{p}^{SS+}$ obtained by SS+ for three
computational scenarios: $N=500$, $N=1000$, and $N=2000$ samples at
each conditional level.} \label{nonlinear_example}
\end{figure}

\end{document}